\newtheorem{definition}{Definition}
\newtheorem{assumption}{Assumption}
\newtheorem{lemma}{Lemma}
\newtheorem{remark}{Remark}
\newtheorem{proposition}{Proposition}
\def\begquo{\begin{quote}}
	\def\endquo{\end{quote}}
\def\begequarr{\begin{eqnarray}}
\def\endequarr{\end{eqnarray}}
\def\begequarrs{\begin{eqnarray*}}
	\def\endequarrs{\end{eqnarray*}}
\def\begarr{\begin{array}}
	\def\endarr{\end{array}}
\def\begequ{\begin{equation}}
\def\endequ{\end{equation}}
\def\lab{\label}
\def\begdes{\begin{description}}
	\def\enddes{\end{description}}
\def\begenu{\begin{enumerate}}
	\def\begite{\begin{itemize}}
		\def\endite{\end{itemize}}
	\def\endenu{\end{enumerate}}
\def\lef[{\left[\begin{array}}
	\def\rig]{\end{array}\right]}
\def\begcen{\begin{center}}
	\def\endcen{\end{center}}
\def\begrem{\begin{remark}\rm}
	\def\endrem{\end{remark}}
\def\begdef{\begin{definition}}
	\def\enddef{\end{definition}}
\def\begpro{\begin{propositionosition}}
	\def\endpro{\end{propositionosition}}
\def\begfac{\begin{fact}}
	\def\endfac{\end{fact}}
\def\begass{\begin{assumptionption}}
	\def\endass{\end{assumptionption}}
\def\begsubequ{\begin{subequations}}
	\def\endsubequ{\end{subequations}}
\def\begmat#1{\begin{bmatrix}#1\end{bmatrix}}
\def\begali#1{\begin{align}{#1}\end{align}}
\def\begalis#1{\begin{align*}{#1}\end{align*}}
\def\caly{{\cal Y}}
\def\calh{{\cal H}}
\def\calt{{\cal T}}
\def\calb{{\cal B}}
\def\calm{{\cal M}}
\def\call{{\cal L}}
\def\cala{{\cal A}}
\def\liminf{\lim_{t \to \infty}}
\def\L2e{{\cal L}_{2e}}
\def\rea{\mathbb{R}}
\def\adj{\mbox{adj}}
\def\col{\mbox{col}}
\def\TAC{{\it IEEE Trans. Automatic Control}}
\def\AUT{{\it Automatica}}
\def\CST{{\it IEEE Trans. Control Systems Technology}}
\def\TPS{{\it IEEE Trans. on Power Systems}}
\begin{document}

\title{State Observation of Power Systems Equipped with Phasor Measurement Units: The Case of Fourth Order Flux-Decay Model}

\author{Alexey~Bobtsov,~\IEEEmembership{Senior Member,~IEEE,}
        Romeo~Ortega,~\IEEEmembership{Life Fellow,~IEEE,} Nikolay~Nikolaev,~\IEEEmembership{Member,~IEEE,} Johannes~Schiffer,~\IEEEmembership{Member,~IEEE,}
        and~M.~Nicolai~L.~Lorenz-Meyer
\thanks{A. A. Bobtsov and N. Nikolaev are with ITMO University, 197101, Saint
	Petersburg, Russian Federation (e-mail: bobtsov@mail.ru, nanikolaev@itmo.ru)}
\thanks{R. Ortega is with CNRS L2S, 91192 Gif-sur-Yvette, France (e-mail:
	ortega@lss.supelec.fr) and with ITMO University}
\thanks{M. N. L. Lorenz-Meyer and J. Schiffer are with Brandenburg University of
	Technology Cottbus-Senftenberg, 03046 Cottbus, Germany (e-mail: florenzmeyer, schifferg@b-tu.de)}

\thanks{Manuscript received March 19, 2020.}}

\maketitle

\begin{abstract}
The problem of effective use of Phasor Measurement Units (PMUs) to enhance power systems awareness and security is a topic of key interest. The central question to solve is how to use this new measurements to reconstruct the {\em state} of the system. In this paper we provide the first solution to the problem of (globally convergent) state estimation of multimachine power systems equipped with PMUs and described by the {\em fourth order} flux-decay model. This work is a significant extension of our previous result, where this problem was solved for the simpler third order model, for which it is possible to recover {\em algebraically} part of the unknown state. Unfortunately, this property is lost in the more accurate fourth order model, significantly complicating the state observation task. The design of the observer relies on two recent developments proposed by the authors, a parameter estimation based approach to the problem of state estimation and the use of the Dynamic Regressor Extension and Mixing (DREM) technique to estimate these parameters. The use of DREM allows us to overcome the problem of lack of persistent excitation that stymies the application of standard parameter estimation designs. Simulation results illustrate the latter fact and show the improved performance of the proposed observer with respect to a locally stable gradient-descent based observer.
\end{abstract}

\begin{IEEEkeywords}
Dynamic state estimation, power system operation, phasor measurements, synchronous generator.
\end{IEEEkeywords}
.
\IEEEpeerreviewmaketitle

\section{Introduction}
\label{sec1}

\IEEEPARstart{P}{ower} systems are experiencing major changes and challenges, such as an increasing amount of power-electronics-interfaced equipment, growing transit power flows and fluctuating (renewable) generation, see \cite{ winter15}. Therefore power systems are operated under more and more stressed conditions and, thus, closer to their stability limits as ever before. In addition, as detailed in \cite{ milano18}, their dynamics become faster, more uncertain and also more volatile. Hence, fast and accurate monitoring of the system states is crucial in order to ensure a stable and reliable system operation, see \cite{ZHAetal}. This, however, implies that the conventional monitoring approaches based on steady-state assumptions are no longer appropriate and instead novel {\em state observers}\footnote{In the power systems community, to distinguish it from the steady-sate case, the qualifier``dynamic" is added to the state observation problem, and it is sometimes called ``dynamic state estimation".} tools have to be developed, see \cite{ZHAetal,SINPAL}.

By recognizing this need, the design of state observers has become a very active research area in the past years. The interest has been further accelerated by the growing deployment of PMUs, see \cite{ terzija10}. The vast majority of the reported results on this matter rely on the use of {\em linear} systems-based theories, {\em e.g.}, the use of Kalman filters, whose performance is assessed only via simulations, see \cite{GHAKAM,PAUetal,WECetal,ZHAetal}. As thoroughly discussed in \cite{LORetal}. this approach suffers from several major drawbacks. 

Recently \cite{LORetal}, the authors provided a globally convergent solution to the state observation problem for the case when the generators are modelled by the classical {\em third order} flux-decay model. Instrumental for the solution of the problem was the observation that, for this model, it is possible to recover {\em algebraically} part of the unknown state. It is widely recognized \cite{MACetal,SINPAL,WECetal} that to improve the precision of the model, it is necessary to include additional dynamic effects, leading to a fourth order model. Unfortunately, for this case, the algebraic reconstruction of part of the state is {\em impossible}, significantly complicating the state observation task. 

In this paper we provide the first solution to the state observation problem for multimachine power systems described by the fourth order model. The design of the observer relies on two recent developments proposed by the authors, a generalization of the Parameter Estimation-based Observer (GPEBO) \cite{ORTetalscl}, and the use of the DREM technique \cite{ARAetaltac} to estimate these parameters. GPEBO was used in \cite{ORTetalgpebo} for the design of observers for bio-chemical reactors and the simplest problem of state estimation of third-order power systems with {\em measurement} of the rotor angle. The latter, practically restrictive assumption, is removed here significantly widening the applicability of the result. Thanks to the use of DREM it is possible to overcome the problem of lack of persistent excitation that stymies the application of standard observer designs. Simulation results, presented in Section \ref{sec6}, illustrate  the latter fact and shows the improved performance of the proposed observer with respect to a locally stable gradient-descent based observer.

Throughout the paper we restrict ourselves to the study of a single generator. As shown in \cite{LORetal}, thanks to the incorporation of the PMUs, for the purposes of observer design it is possible to treat multimachine systems as a set of decentralized single machines, hence our result can be extended in a straightforward way to the multimachine case. In the interested of brevity we omit the details of this generalization, and refer the interest reader to \cite[Section II]{LORetal} for the details.    

\section{Mathematical Model and Problem Formulation}
\label{sec2}

We consider the well-known fourth-order model of the single machine system given by  \cite[eq. (1)]{GHAKAM}, see also \cite[Chapter 5.4]{SAUetal} and \cite[Chapter 11.1.7.1]{MACetal} , 
\begsubequ
\lab{x}
\begali{
	\label{x1}
	\dot x_1&= x_2\\
	\label{x2}
	\dot x_2&=-a_0x_2 + b_0(u_1 -y_5)\\
	\label{x3}
	\dot x_3&=-a_2 x_3 + b_2 y_2\sin(x_1-y_1)\\
	\label{x4}
	\dot x_4&=-a_1 x_4+b_1 y_2\cos(x_1-y_1)+c_1u_2,
}
\endsubequ

where the {\em unknown} state and input variables are defined as
\begalis{
	x&:=\begmat{ x_1 & x_2 & x_3 & x_4\end{bmatrix}^\top=\begin{bmatrix} \delta & \omega & E_d' & E_q'}^\top\\
	u&:=\begmat{ u_1 & u_2\end{bmatrix}^\top=\begin{bmatrix} P_m & E_f }^\top,
}
with $\delta$ the rotor angle, $\omega$ the shaft speed, $E_d'$ and $E_q'$ the direct and quadrature axis internal voltages, respectively, $P_m$ the mechanical power and $E_f$ the field voltage, and we defined the positive constants
\begalis{
	a_0&:=\frac{\omega_0D}{2H},\;b_0:=\frac{\omega_0}{2H}\\
	a_1&:=\frac{1}{T_{d0'}}\frac{x_d}{x_d'},\;b_1:=\frac{1}{T_{d0'}}\frac{(x_d-x_d')}{x_d'},\;c_1:=\frac{1}{T_{d0'}}\\
	a_2&:=\frac{1}{T_{q0'}}\frac{x_q}{x_q'},\;b_2:=\frac{1}{T_{q0'}}\frac{(x_q-x_q')}{x_q'}.
}
All variables and constants are defined in Table 1 in the Appendix. The PMU {\em measurements} are

\begin{align}
\lab{y}
y:=\begin{bmatrix} y_1 & y_2 & y_3 & y_4 & y_5 & y_6\end{bmatrix}^\top=\\
\nonumber
\begin{bmatrix} \theta_t & V_t & \phi_t & I_t & P_t & Q_t\end{bmatrix}^\top,
\end{align}
where 
\begsubequ
\lab{y456}
\begali{
	\lab{y4}
	y^2_4&=\frac{1}{x_q'^2} (x^2_3+x^2_4+y_2^2\\
	\nonumber
	&-2y_2[x_4 \cos(x_1-y_1)+x_3 \sin (x_1-y_1))] \\
	\lab{y5}
	y_5&=\frac{y_2}{x_q'}[x_4  \sin(x_1-y_1) -x_3 \cos(x_1-y_1)] \\
	y_6&=\frac{y_2}{x_q'}[x_4  \cos(x_1-y_1) +x_3 \sin(x_1-y_1)-y_2]
	\lab{y6}
}
\endsubequ

We underscore the presence of the terminal bus voltage $y_2=V_t$ that, in a multimachine scenario, captures the effect of the interconnection among the machines, see \cite[Section II]{LORetal} for details. \\

To formulate the observer problem we  make the following assumptions on systems prior knowledge  and the available {measurements}.

\begin{assumption}
	\label{ass1} \em
	The signals $u$ are {\em measurable} and the electrical subsystem parameters $(a_1,a_2,b_1,b_2.c_1)$ are {\em known}.\footnote{As usual in observer problems \cite{BER}  we assume that $u$ is bounded and such that all state trajectories are {\em bounded.}}
\end{assumption}

\begdes
\item {\bf Problem Formulation:} Consider the SMIB power system (\ref{x}) with measurable outputs \eqref{y}, \eqref{y456}, verifying Assumption \ref{ass1}. Design an observer
\begalis{
	\dot \chi &= F(\chi,u,y)\\
	\hat x&:=H(\chi,u,y)
}
such that $\liminf \tilde x(t)=0$, where we, generically, define the estimation {\em errors} $\tilde{(\cdot)}:=\hat{(\cdot)}-(\cdot)$. 
\enddes

\begrem
\lab{rem1}
The model \eqref{x1}-\eqref{x4} is obtained making the standard assumption that the {\em stator resistance is zero}.  Moreover, the expressions in \eqref{y5} and \eqref{y6} are obtained by neglecting transient saliency, thus assuming that the direct- and quadrature-axis transient reactances, $x_d'$ and $x_q'$, respectively, are {\em equal}.
\endrem

\begrem
\lab{rem2}
The expression for $y_4$ given in \eqref{y4} can be derived using the direct- and quadrature-axis currents $I_d$ and $I_q$, respectively, defined as 
\begalis{
	I_d&:=\frac{1}{x_d'}[x_4 -y_2 \cos(x_1-y_1)]\\
	I_q&:=\frac{1}{x_q'}[-x_3+y_2  \sin(x_1-y_1)],
}  
with $x_d'=x_q'.$ 
\endrem
%

\section{3rd and 4th Order Models: A Fundamental Difference}
\lab{sec3}

As indicated in Section \ref{sec1}, in \cite{LORetal} we present a globally convergent solution to the state observation problem for the case when the generators are modeled by the classical {\em third order} flux-decay model given by\footnote{To avoid cluttering, and with some abuse of notation, we keep the same symbols for the 3rd and the 4th order models.}
\begalis{
	\dot x_1&=x_2\\
	\dot x_2&=-a_1 x_2 + a_2[P_m - Y V x_3 \sin(x_1)]\\
	\dot x_3 &= -a_3 x_3+ a_4 V \cos(x_1)+ E_f,
}
where the {\em unknown} state is defined as

\begalis{
	x&:=\begmat{ x_1 & x_2 & x_3 \end{bmatrix}^\top=\begin{bmatrix} \delta-\theta_t & \omega & E_q' }^\top, }

with $Y>0$ is the susceptance of the network admittance, and $V$ is the terminal voltage. All other parameters of the model $a_i,\;i=1,4$, are constant. The {\em measurements}, which are provided via PMUs, are defined as 
\begali{
	\nonumber
	y_1&=V \\
	\nonumber
	y_2&=Y V x_3 \sin(x_1)\\
	\nonumber
	y_3&= Y Vx_3\cos(x_1) - Y V^2 \\
	\lab{y3rd}
	y_4^2 &= Y^2 [x^2_3+V^2-2Vx_3 \cos(x_1)],  \\
	y_5 &= f_{t},
}  
where $y_1>0$ is the terminal voltage, $y_2$ is the active power, $y_3$ is the reactive power, $y_4$ is the terminal current and $y_{5}$ the terminal voltage frequency. 

In \cite{LORetal} it was shown that it is possible to algebraically reconstruct the states $x_1$ and $x_3$ from the measurements $y$ as follows.  

\begin{proposition}\em \cite{LORetal}
	\lab{pro1}
	The states $x_{1}$ and $x_{3}$ can be determined uniquely from the PMU measurements \eqref{y3rd} via
	\begin{align*}
	x_{3} &= \sqrt{\frac{y_{4}^2 + 2Yy_{3}}{Y^2} + y_{1}^2} \\
	x_{1} &= \arcsin\left(\frac{y_{2}}{Y_iy_{1}x_{3}}\right).
	\end{align*}
\end{proposition}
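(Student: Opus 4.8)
The plan is to treat this as a purely algebraic elimination problem, since the four relevant measurements $y_1,y_2,y_3,y_4$ in \eqref{y3rd} are all static functions of the two unknowns $x_1$ and $x_3$, with $V=y_1$ directly available. I would reconstruct $x_3$ first and then $x_1$.

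The obstacle in recovering $x_3$ is that each of $y_2$, $y_3$, $y_4^2$ couples $x_1$ and $x_3$ nonlinearly, so no single measurement isolates $x_3$. The key observation is that the trigonometric cross-term $\cos(x_1)$ appears in both $y_3$ and $y_4^2$ and can be annihilated by a suitable linear combination. Concretely, I would form $y_4^2+2Yy_3$ and substitute the defining relations:
\begin{align*}
y_4^2+2Yy_3 &= Y^2\left[x_3^2+V^2-2Vx_3\cos(x_1)\right]+2Y\left[YVx_3\cos(x_1)-YV^2\right]\\
&= Y^2\left(x_3^2-V^2\right).
\end{align*}
The $-2Y^2Vx_3\cos(x_1)$ inside $y_4^2$ exactly cancels the $+2Y^2Vx_3\cos(x_1)$ produced by $2Yy_3$, and the $V^2$ terms combine. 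Dividing by $Y^2$ and adding $y_1^2=V^2$ then yields $x_3^2$, from which the stated formula follows by taking the square root.

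With $x_3$ in hand, I would recover $x_1$ from the single remaining relation $y_2=YVx_3\sin(x_1)$. Since $V=y_1$ and $x_3$ is now known, solving for $\sin(x_1)=y_2/(Yy_1x_3)$ and applying $\arcsin$ gives the claimed expression.

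I expect the only genuine subtleties — as opposed to the elementary algebra — to be the uniqueness and branch-selection issues. Taking the positive square root for $x_3$ must be justified from the fact that $x_3=E_q'$ is a positive internal voltage under normal operation, and selecting the principal branch of $\arcsin$ must be justified by restricting the rotor-angle deviation $x_1$ to $(-\pi/2,\pi/2)$; only these choices make the reconstruction \emph{unique}, which is precisely what the proposition asserts. I would also verify that the arguments are admissible, i.e. that the quantity under the root is nonnegative and $|y_2/(Yy_1x_3)|\le 1$, both of which hold by construction of the defining relations \eqref{y3rd}.
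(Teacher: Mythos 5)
Your proposal is correct, and it is essentially the derivation implied by the stated formulas: this proposition is quoted from \cite{LORetal} without proof in the present paper, and the combination $y_4^2+2Yy_3$ that annihilates the $\cos(x_1)$ cross-term, followed by inversion of $y_2=Yy_1x_3\sin(x_1)$, is precisely the algebra behind the claimed expressions (the $Y_i$ in the proposition is a leftover multimachine index and should read $Y$). Your added remarks on branch selection --- the positive root since $x_3=E_q'>0$ and the principal branch of $\arcsin$ under the standard operating assumption $x_1\in(-\pi/2,\pi/2)$ --- are exactly the conditions under which the word ``uniquely'' is justified, so no gap remains.
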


This result is essential for the solution of the state observation problem, which now reduces to the observation of $x_2$. We will show now that, unfortunately, this fundamental property of the state-to-output map\footnote{We mean here the mapping $(x_1,x_3,x_4) \mapsto y$.} is lost for the 4th order model \eqref{y}. To prove this fact we first observe that $y_1$ and $y_2$ are {\em unrelated} with $x$. Therefore, we propose the following definitions
$$
z:=\begmat{x_q'^2y^2_4 -y_2^2\\ {x_q'y_5 \over y_2} \\ {x_q'y_6 \over y_2}+y_2} \in \calb \subset \rea^3,\; v:=\begmat{x_1-y_1 \\ x_3 \\x_4} \in \cala \subset \rea^3,
$$
and look at the parameterized (in $y_2$) mapping $M_{y_2}:\cala \mapsto \calb$
$$
z =M_{y_2}(v)=\begmat{v^2_2+v^2_3-2y_2[v_2 \cos(v_1)+v_3 \sin (v_1)]\\v_2  \sin(v_1) -v_3 \cos(v_1)  \\ v_2  \cos(v_1) +v_3 \sin(v_1)}
$$

The task is to check whether $M_{y_2}$ is {\em injective}. This is equivalent to the existence of a mapping $M_{y_2}^I:\calb \mapsto \cala$ such that
$$
M_{y_2}^I(M_{y_2}(v))=v.
$$
We make now the second observation, namely, that $v^2_2+v^2_3$ is measurable. Hence, we can define  an alternative mapping $N_{y_2}:v \to z_N$
\begin{align}
\nonumber
z_N:=z-\begmat{v^2_2+v^2_3 \\ 0 \\ 0} =N_{y_2}(v)\\
\nonumber
=\begmat{-2y_2[v_2 \cos(v_1)+v_3 \sin (v_1)]\\v_2  \sin(v_1) -v_3 \cos(v_1)  \\ v_2  \cos(v_1) +v_3 \sin(v_1)}
\end{align}

We are in position to prove our claim of non-injectivity of the state-to-output map \eqref{y}.
\begin{lemma}\em
	\lab{lem1}
	The mapping  $(x_1,x_3,x_4) \mapsto (y_4,y_5,y_6)$ is non-injective.
\end{lemma}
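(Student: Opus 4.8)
The plan is to exploit the structure of the reduced map $N_{y_2}$ introduced above and exhibit an explicit one-parameter symmetry of the state-to-output map; producing two distinct preimages with the same image then immediately contradicts injectivity. First I would record the two structural facts that drive the argument. The first is that the top entry of $N_{y_2}(v)$ equals $-2y_2$ times its bottom entry, so the value of $N_{y_2}$ (and hence, after adding back the measurable $v_2^2+v_3^2$ in $z_N$, the value of $M_{y_2}$, i.e. of $(y_4,y_5,y_6)$) is completely determined by its last two components
\[
s := v_2\sin(v_1)-v_3\cos(v_1),\qquad c := v_2\cos(v_1)+v_3\sin(v_1).
\]

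The second fact is that $(s,c)$ is nothing but a rotation of $(v_2,v_3)$ steered by the angle $v_1$. Writing $v_2=r\cos(\psi)$, $v_3=r\sin(\psi)$ with $r=\sqrt{v_2^2+v_3^2}$, a direct trigonometric computation yields $s=r\sin(v_1-\psi)$ and $c=r\cos(v_1-\psi)$. Hence the output depends on $v=(v_1,v_2,v_3)$ only through the two scalars $r$ and the phase difference $v_1-\psi$. This is the crux: three coordinates collapse onto two invariants, so exactly one degree of freedom is free, which is precisely the source of non-injectivity.

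Making this explicit, I would verify that for every $\alpha\in\rea$ the transformation
\[
\begin{bmatrix}v_1\\ v_2\\ v_3\end{bmatrix}\;\longmapsto\;
\begin{bmatrix}v_1+\alpha\\ v_2\cos(\alpha)-v_3\sin(\alpha)\\ v_2\sin(\alpha)+v_3\cos(\alpha)\end{bmatrix}
\]
sends $\psi$ to $\psi+\alpha$ while leaving $r$ unchanged, so both invariants $r$ and $v_1-\psi$ are preserved and therefore $N_{y_2}(v)$, $z_N$ and $(y_4,y_5,y_6)$ are all invariant. Fixing any interior point $v^\star\in\cala$ and choosing $\alpha\neq 0$ small enough that the transformed point also lies in $\cala$, the two points are distinct (their first coordinates differ by $\alpha$) yet share the same image, which establishes non-injectivity of $(x_1,x_3,x_4)\mapsto(y_4,y_5,y_6)$.

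The only delicate point is domain admissibility: I must guarantee that both the reference point and its rotated copy belong to the physical set $\cala\subset\rea^3$. Since $\cala$ is the operating region and thus has nonempty interior, continuity of the transformation in $\alpha$ ensures that a sufficiently small $\alpha\neq 0$ keeps the rotated point inside $\cala$, so this presents no genuine obstruction. The remaining verifications — the proportionality of the top and bottom entries of $N_{y_2}$ and the invariance of $(s,c)$ under the rotation — are the routine trigonometric identities I would carry out to close the argument.
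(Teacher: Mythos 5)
Your proof is correct, and it takes a genuinely different---in fact complementary---route to the paper's. The paper argues infinitesimally: it computes the Jacobian $\nabla N_{y_2}(v)$ and exhibits the null vector $(1,\,-v_3,\,v_2)$, concluding non-injectivity from $\det\{\nabla N_{y_2}(v)\}=0$. Your argument is precisely the integrated version of that computation: the one-parameter family you construct, $v_1\mapsto v_1+\alpha$ together with rotation of $(v_2,v_3)$ by the angle $\alpha$, is exactly the flow of the vector field $(1,\,-v_3,\,v_2)$, as one sees by differentiating your transformation at $\alpha=0$. What your version buys is completeness and interpretation: exhibiting two distinct points of $\cala$ with the same image contradicts injectivity directly, whereas an everywhere-singular Jacobian does not by itself imply non-injectivity (think of rank degeneracies at isolated points); one still needs to integrate the kernel direction or invoke a rank argument on an open set, a step the paper leaves implicit and which your construction supplies. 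Your reduction to the two invariants $r=\sqrt{v_2^2+v_3^2}$ and $v_1-\psi$ also makes the physical obstruction transparent: the outputs see only the modulus of the internal voltage vector and a relative phase, so exactly one angular degree of freedom is invisible. What the paper's version buys is brevity, since a $3\times 3$ determinant with an obvious null vector is a two-line check. On the one delicate point you flag---admissibility of the rotated point---your handling is adequate: $\cala$ is an operating region with nonempty interior, and in fact the paper's own proof tacitly needs the same openness for its conclusion to follow, so this is a shared (and mild) hypothesis rather than a gap in your argument.
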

\begin{proof} 
	In view of the discussion above, to prove the claim it suffices to show that
	$$
	\det\{\nabla N_{y_2}(v)\}=0.
	$$
	Some simple calculations yield
	
	\begin{align}
	\nonumber
	\nabla N_{y_2}(v)=\begmat{n_{11}(v) & -2y_2 \cos(v_1) & -2y_2 \sin(v_1)\\
		n_{21}(v) &   \sin(v_1) & -\cos(v_1) \\
		n_{31}(v)  &   \cos(v_1) & \sin(v_1)},
	\end{align}
	where 
	\begin{align}
	\nonumber
	n_{11}(v)&=2y_2[v_2 \sin(v_1)-v_3 \cos (v_1)],\\
	\nonumber
	n_{21}(v)&=v_2  \cos(v_1) +v_3 \sin(v_1),\\
	\nonumber
	n_{31}(v)&=-v_2  \sin(v_1) +v_3 \cos(v_1).
	\end{align}

		It is easy to see that
	$$
	\nabla N_{y_2}(v)\begmat{1 \\ -v_3 \\ v_2}=0,
	$$
	completing the proof.
\end{proof}

\section{Reparameterization of the Electrical Dynamics}
\label{sec4}
%
In this section we  propose a reparameterization of the electrical dynamics which is {\em linear} in $x_3$ and $x_4$. Moreover, we define three state-to-output mappings that will be used for the observer design.

\begin{lemma}\em
	\lab{lem2}
	Consider the SMIB model \eqref{x} with PMU measurements \eqref{y} and \eqref{y456}. 
	\begenu[{\bf C1}]
	\item There exists a matrix of measurable signals $\calm \in \rea^{2 \times 2}$ such that
	\begali{
		\lab{dotx3x4}
		\begmat{ \dot x_3 \\ \dot x_4}=\calm\begmat{x_3 \\ x_4} + \begmat{ 0 \\ c_1 u_2}.
	}
	\item There exists a measurable signal $Y =
	\begin{bmatrix}
	Y_1 \\ Y_2\\ Y_3
	\end{bmatrix}
	\in \rea^3$ such that
	\begequ
	\lab{Y}
	Y=\begmat{x^2_3+x^2_4 \\ e^{Jx_1} \begmat{x_3 \\ x_4}},
	\endequ
	with $J:=\begmat{0 & -1 \\ 1 & 0}.$
	\endenu
\end{lemma}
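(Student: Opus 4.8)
The plan is to remove the sole obstruction to both claims, namely the unknown trigonometric terms $\sin(x_1-y_1)$ and $\cos(x_1-y_1)$ appearing in the electrical dynamics \eqref{x3}--\eqref{x4}, by expressing them through measurable signals. To this end I would reuse the measurable signals $z_2,z_3$ introduced in Section \ref{sec3}, which by \eqref{y5}--\eqref{y6} satisfy
\begin{align}
\label{plan-z}
z_2 &= x_4\sin(x_1-y_1)-x_3\cos(x_1-y_1), \\
\nonumber
z_3 &= x_3\sin(x_1-y_1)+x_4\cos(x_1-y_1).
\end{align}
Everything below is algebra built on \eqref{plan-z}.

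The first step is to establish the first entry of \eqref{Y}, i.e. that $x_3^2+x_4^2$ is measurable. Combining \eqref{y4} and \eqref{y6}, the unknown trigonometric terms cancel and one obtains
$$
x_3^2+x_4^2 = x_q'^2 y_4^2 + y_2^2 + 2x_q' y_6 =: Y_1,
$$
which is a measurable signal. This both settles $Y_1$ in \eqref{Y} and, crucially, supplies the nonzero determinant needed in the next step.

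For claim \textbf{C1} I would read \eqref{plan-z} as a linear system in the unknowns $\big(\sin(x_1-y_1),\cos(x_1-y_1)\big)$,
\begin{align}
\label{plan-sys}
\begmat{x_4 & -x_3 \\ x_3 & x_4}\begmat{\sin(x_1-y_1) \\ \cos(x_1-y_1)} = \begmat{z_2 \\ z_3},
\end{align}
whose coefficient determinant equals $Y_1$. Inverting \eqref{plan-sys} expresses $\sin(x_1-y_1)$ and $\cos(x_1-y_1)$ as ratios with denominator $Y_1$ and numerators that are measurable and linear in $(x_3,x_4)$. Substituting these into \eqref{x3}--\eqref{x4} renders the terms $b_2 y_2\sin(x_1-y_1)$ and $b_1 y_2\cos(x_1-y_1)$ linear in $(x_3,x_4)$ with measurable coefficients, yielding \eqref{dotx3x4} with
$$
\calm = \begmat{-a_2+\frac{b_2 y_2 z_3}{Y_1} & \frac{b_2 y_2 z_2}{Y_1} \\ -\frac{b_1 y_2 z_2}{Y_1} & -a_1+\frac{b_1 y_2 z_3}{Y_1}}.
$$
Since $(a_1,a_2,b_1,b_2)$ are known by Assumption \ref{ass1} and $y_2,z_2,z_3,Y_1$ are measurable, every entry of $\calm$ is a measurable signal.

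For the remaining entries of claim \textbf{C2}, a single observation suffices: writing $e^{J\alpha}=\begmat{\cos\alpha & -\sin\alpha \\ \sin\alpha & \cos\alpha}$ and using \eqref{plan-z},
$$
e^{J(x_1-y_1)}\begmat{x_3 \\ x_4}=\begmat{x_3\cos(x_1-y_1)-x_4\sin(x_1-y_1) \\ x_3\sin(x_1-y_1)+x_4\cos(x_1-y_1)}=\begmat{-z_2 \\ z_3},
$$
which is measurable. Because $J$ commutes with itself, $e^{Jx_1}=e^{Jy_1}e^{J(x_1-y_1)}$, and since $y_1$ is a measurement the matrix $e^{Jy_1}$ is known, whence
$$
e^{Jx_1}\begmat{x_3 \\ x_4}=e^{Jy_1}\begmat{-z_2 \\ z_3}
$$
is measurable, completing \eqref{Y}. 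The one delicate point I expect is the invertibility of \eqref{plan-sys}, which fails exactly where $x_3^2+x_4^2=0$; this is excluded on the operating domain $\cala$ of Section \ref{sec3}, on which the internal voltage magnitude $\sqrt{x_3^2+x_4^2}$ is nonzero, so that $Y_1$ stays bounded away from zero and the inversion is well defined.
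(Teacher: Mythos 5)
Your proof is correct and follows essentially the same route as the paper: you obtain $Y_1=x_3^2+x_4^2$ by cancelling the trigonometric terms between \eqref{y4} and \eqref{y6}, solve \eqref{y5}--\eqref{y6} for $\sin(x_1-y_1)$ and $\cos(x_1-y_1)$ as measurable multiples of linear functions of $(x_3,x_4)$ divided by $Y_1$ (your matrix $\calm$ coincides with \eqref{calm} after noting $y_2z_2=x_q'y_5$ and $y_2z_3=x_q'z_0$), and construct $(Y_2,Y_3)$ via the factorization $e^{Jx_1}=e^{Jy_1}e^{J(x_1-y_1)}$ exactly as in \eqref{Y2Y3}. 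The only cosmetic difference is that you make the $2\times 2$ linear-system inversion (with determinant $Y_1$ bounded away from zero) explicit, where the paper states the equivalent identities \eqref{sincos} directly.
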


\begin{proof}
	To avoid cluttering, let us define the measurable signal
	\begequ
	\lab{z0}
	z_0:=y_6+ {y_2^2 \over x_q'}.
	\endequ
	From \eqref{y5} and \eqref{y6} we get after some simple calculations
	\begali{
		\nonumber
		y_5 x_4 + z_0 x_3 &= { y_2  \over x_q'} (x_3^2+x_4^2)\sin(x_1 - y_1)\\
		\lab{sincos}
		z_0 x_4 - y_5 x_3 &= {y_2 \over x_q'}  (x_3^2+x_4^2) \cos(x_1 - y_1).
	}
	Note also that from \eqref{y456} we have
	\begali{
		x^2_3+x^2_4&=(x_q')^2 y_4^2+2 x_q' y_6+ y_2^2 =Y_1,
		\lab{Y1}
	}
	where we notice that $Y_1$ is bounded away from zero. Replacing \eqref{Y1} in \eqref{sincos} and rearranging terms we get
	\begalis{
		y_2 \sin(x_1 - y_1)&= { x_q' \over Y_1}(y_5 x_4 + z_0 x_3)\\
		y_2 \cos(x_1 - y_1)&= { x_q' \over Y_1}(z_0 x_3 - y_5 x_3).
	}
	Replacing the latter equations in \eqref{x3} and \eqref{x4} and defining the matrix
	\begequ
	\lab{calm}
	\calm:=\begmat{ -a_2 + {b_2 x_q' \over Y_1} z_0 & {b_2  x_q' \over Y_1}y_5  \\
		- {b_1 x_q' \over Y_1} y_5 & -a_1 + {b_1 x_q' \over Y_1} z_0}.
	\endequ
	completes the proof of claim {\bf C1}.
	
	We proceed now to complete the proof of the claim {\bf C2}---notice that \eqref{Y1} is the first identity in \eqref{Y}. From \eqref{y5} and \eqref{y6} we get
\begin{align}
\nonumber
\begmat{-{x_q' \over y_2} y_5 \\   {x_q' \over y_2}y_6+y_2}&=\begmat{ \cos(x_1-y_1) & - \sin(x_1-y_1)  \\ \sin(x_1-y_1)&\cos(x_1-y_1)}\begmat{x_3 \\ x_4}\\
\nonumber
&=e^{J(x_1 - y_1)}\begmat{x_3 \\ x_4}.
\end{align}

	The proof is completed defining
	\begequ
	\lab{Y2Y3}
	\begmat{Y_2 \\ Y_3}:=e^{Jy_1} \begmat{-{ x_q' \over y_2}y_5 \\ { x_q' \over y_2}y_6 +y_2}.
	\endequ
\end{proof}

\begrem
\lab{rem3}
From \eqref{dotx3x4} and the first component in \eqref{Y}, that is $Y_1=x_3^2+x_4^2$, we see that we are dealing with a subsystem with linear dynamics but {\em nonlinear} state-output map. To the best of our knowledge \cite[Chapter 5]{ASTetal}, \cite[Chapter 3]{BER} there is no systematic way to design state observers for this class of systems. In Subsection \ref{subsec52} we present a gradient-descent based scheme \cite{SHI} for which some local stability properties can be established.
\endrem
%

\section{Proposed State Observer}
\label{sec5}
%
A corollary of the claim {\bf C2} is that there are two possibilities to reconstruct the states $(x_1,x_3,x_4)$. The first option is to find an observer for $x_1$ and get $(x_3,x_4)$ from the components $Y_2$ and $Y_3$ of \eqref{Y}. Alternatively, we can estimate $(x_3,x_4)$ and---as shown in the proposition below---obtain $x_1$ from simple trigonometric relations. The state $x_2$ can be reconstructed with the I\&I observer of \cite[Lemma 1]{LORetal}. 

Although the first approach looks simpler, the design of an observer for $x_1$ is still an open problem. Therefore, in this section we take the second route and design a GPEBO observer \cite{ORTetalgpebo} for the states  $(x_3,x_4)$. To enhance readability we divide the presentation of the observer in two parts, first---in the spirit of PEBO \cite{ORTetalscl} that translates the problem of state observation into one of parameter estimation---the derivation of a {\em nonlinear regression equation} required for the parameter estimation is given. Then, we invoke DREM \cite{ARAetaltac} to carry out the latter task with weak excitation requirements.
\subsection{Derivation of the regression equation for parameter estimation}
\label{subsec51}
%
\begin{lemma}
	\lab{lem3}\em
	Consider the electrical dynamics \eqref{dotx3x4} and the output map \eqref{Y}. Define the dynamic extension
	\begsubequ
	\lab{dotxiphi}
	\begali{
		\lab{dotxi}
		\dot{\xi} & =A(t) \xi + \begmat{ 0 \\ c_1 u_2} \\
		\lab{dotphi}
		\dot \Phi & = A(t) \Phi,\; \Phi(0)=I_2,
	}
	\endsubequ
	where we defined the time-varying matrix\footnote{That is, the evaluation of the matrix $\calm$, given in \eqref{calm}, along the trajectories of the system outputs.}
	\begequ
	\lab{a}
	A(t):=\begmat{ -a_2 + {b_2 x_q' \over Y_1(t)} z_0(t) & {b_2  x_q' \over Y_1(t)}y_5(t)  \\
		- {b_1 x_q' \over Y_1(t)} y_5(t) & -a_1 + {b_1 x_q' \over Y_1(t)} z_0(t)}.
	\endequ
	\begenu[{\bf P1}]
	\item There exists a constant vector $\theta \in \rea^{2}$ such that
	\begequ
	\lab{xxithe}
	\begmat{x_3 \\ x_4}= \xi+\Phi \theta.
	\endequ
	\item There exists measurable signals $y_E \in \rea$ and $\psi \in \rea^5$ such that
	\begin{align}
	\label{ye}
	y_E=\psi^\top  \Theta,
	\end{align}
	where we defined the constant vector
	\begequ
	\lab{The}
	\Theta:=\col(\theta_1,\theta_2,\theta_1 \theta_2,\theta_1^2,\theta_2^2).
	\endequ
	\item  Define the observer {
	\begequ
	\lab{staobs}
	\begmat{\hat x_1 \\ \hat x_3 \\ \hat x_4}= \begmat{\arcsin\Big({1 \over Y_1}\begmat{Y_3 & -Y_2}( \xi+\Phi \hat \theta)\Big) \\ \xi+\Phi \hat \theta},
	\endequ  }
	where  $\hat \theta$ is an estimate of the parameter $\theta$.  The following implication is true
	\begequ
	\lab{concon}
	\liminf \tilde \theta(t)=0\;\Rightarrow\;\liminf \begmat{\tilde x_1(t) \\ \tilde x_3(t) \\ \tilde x_4(t)}=0.
	\endequ
	\endenu
\end{lemma}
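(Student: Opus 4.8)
The plan is to prove the three claims in sequence, since \textbf{P1} and \textbf{P2} are essentially constructive while the only real difficulty sits in \textbf{P3}. For \textbf{P1} I would follow the GPEBO construction of \cite{ORTetalscl}. Write $w:=\col(x_3,x_4)$ and observe that, evaluated along the trajectories, the matrix $\calm$ of \eqref{dotx3x4} is exactly $A(t)$, so that $\dot w=A(t)w+\col(0,c_1u_2)$. Introduce the auxiliary error $e:=w-\xi$; subtracting \eqref{dotxi} cancels the common input term and leaves the homogeneous equation $\dot e=A(t)e$. Since $\Phi$ from \eqref{dotphi} is the state-transition matrix of $A(t)$ with $\Phi(0)=I_2$, we have $e(t)=\Phi(t)e(0)$, and setting the constant $\theta:=w(0)-\xi(0)\in\rea^2$ gives $w=\xi+\Phi\theta$, which is \eqref{xxithe}.

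For \textbf{P2} I would exploit the only measurable but nonlinear entry of \eqref{Y}, namely $Y_1=x_3^2+x_4^2=\|w\|^2$. Substituting \eqref{xxithe} gives $Y_1=\|\xi\|^2+2\xi^\top\Phi\theta+\theta^\top\Phi^\top\Phi\theta$. Defining the measurable signal $y_E:=Y_1-\|\xi\|^2$ isolates a scalar identity whose right-hand side is the sum of the term $2\xi^\top\Phi\theta$, linear in $(\theta_1,\theta_2)$, and the quadratic form $\theta^\top\Phi^\top\Phi\theta=\|\Phi\theta\|^2$, which contributes the monomials $\theta_1^2,\theta_1\theta_2,\theta_2^2$. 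Collecting the associated coefficients---each a measurable function of $\xi$ and the columns of $\Phi$---into a vector $\psi\in\rea^5$ and stacking the five monomials into $\Theta$ of \eqref{The} yields $y_E=\psi^\top\Theta$, i.e.\ \eqref{ye}; writing out the entries of $\psi$ is then a routine expansion.

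For \textbf{P3} I would first treat the two lower components. Subtracting \eqref{xxithe} from the lower block of \eqref{staobs} gives at once $\col(\tilde x_3,\tilde x_4)=\Phi\tilde\theta$. For the angle, the identity that justifies \eqref{staobs} is obtained from $\col(Y_2,Y_3)=e^{Jx_1}w$: a direct computation gives $Y_3x_3-Y_2x_4=(x_3^2+x_4^2)\sin x_1=Y_1\sin x_1$, hence ${1\over Y_1}\begin{bmatrix}Y_3 & -Y_2\end{bmatrix}w=\sin x_1$ and $\arcsin(\cdot)=x_1$ provided $x_1$ stays in the principal branch of $\arcsin$. Replacing $w$ by $\xi+\Phi\hat\theta$ therefore perturbs the $\arcsin$ argument by exactly ${1\over Y_1}\begin{bmatrix}Y_3 & -Y_2\end{bmatrix}\Phi\tilde\theta$, so by continuity of $\arcsin$ both $\tilde x_1$ and $\col(\tilde x_3,\tilde x_4)$ converge as soon as $\Phi\tilde\theta\to0$.

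The hard part will be this last reduction. The implication $\tilde\theta\to0\Rightarrow\Phi\tilde\theta\to0$ requires $\Phi\in\callinf$, and this is \emph{not} delivered by Assumption \ref{ass1}: boundedness of the physical trajectories controls only the single combination $\Phi\theta=w-\xi$, not $\Phi$ in every direction. I would close this gap by proving that $\dot e=A(t)e$ has bounded solutions, attempting a Lyapunov argument on the symmetric part $A(t)+A(t)^\top$ that uses the damping $-a_1,-a_2$ together with the boundedness of $z_0$, $y_5$ and of $Y_1^{-1}$ (the latter guaranteed since $Y_1$ is bounded away from zero by \eqref{Y1}). Should this symmetric part fail to be sign-definite at some operating points, the fallback is to take boundedness of $\Phi$ as a standing GPEBO requirement. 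Once $\Phi\in\callinf$ is secured, $\Phi\tilde\theta\to0$, and the boundedness of $Y_2,Y_3,Y_1^{-1}$ then delivers the angle convergence, establishing \eqref{concon}.
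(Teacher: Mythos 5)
Your proposal coincides, step for step, with the paper's own proof: the error signal $\epsilon:=\col(x_3,x_4)-\xi$ obeying the homogeneous LTV equation $\dot\epsilon=A(t)\epsilon$ whose state transition matrix is $\Phi$, with $\theta=\epsilon(0)$ (claim {\bf P1}); the expansion of $Y_1=|\xi+\Phi\theta|^2$ with $y_E:=Y_1-|\xi|^2$ and the collection of the coefficients of the five monomials into $\psi$ (claim {\bf P2}); and the identity $x_3Y_3-x_4Y_2=(x_3^2+x_4^2)\sin(x_1)=Y_1\sin(x_1)$ that justifies the $\arcsin$ formula in \eqref{staobs} (claim {\bf P3}). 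All of these computations are correct and match the paper exactly.

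Where you depart from the paper is in refusing to treat the final implication as immediate, and your concern is legitimate. The paper's proof of {\bf P3} consists of the remark that $\col(\tilde x_3,\tilde x_4)=\Phi\tilde\theta$ makes \eqref{concon} ``obvious''; boundedness of $\Phi$ is never discussed. You are right that this is precisely what is needed: since \eqref{concon} is asserted for an arbitrary estimate with $\tilde\theta(t)\to0$ (and, in Proposition \ref{pro2}, the DREM error $\tilde\theta_i(t)=e^{-\gamma_i\int_0^t\Delta^2(s)ds}\,\tilde\theta_i(0)$ may decay arbitrarily slowly when $\Delta\notin\call_2$ only marginally), the conclusion $\Phi\tilde\theta\to0$ requires $\Phi\in\callinf$, i.e., uniform stability of $\dot\epsilon=A(t)\epsilon$. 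Assumption \ref{ass1} does not deliver this: bounded physical trajectories bound neither $\xi$ nor $\Phi$ (note that $\Phi\theta=\col(x_3,x_4)-\xi$ involves the observer state $\xi$, so it constrains even less than you suggest), and by Abel's formula $\det\Phi(t)=\exp\big(\int_0^t\mbox{tr}\,A(s)\,ds\big)$ with $\mbox{tr}\,A=-(a_1+a_2)+(b_1+b_2)x_q'z_0/Y_1$ not sign-definite in general, growth of $\Phi$ cannot be excluded a priori. Your Lyapunov attempt on the symmetric part of $A(t)$ is the natural repair, but, as you anticipate, it need not be negative semidefinite at all operating points; the honest resolution is your fallback---postulating $\Phi\in\callinf$ (equivalently, stability of the LTV electrical dynamics) as a standing hypothesis, which is what the GPEBO framework implicitly requires. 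So your proof is not weaker than the paper's: it makes explicit an assumption the paper uses silently, and the same holds for your principal-branch caveat on $\arcsin$ (validity of the $x_1$ formula requires $x_1-$ to remain in $(-\pi/2,\pi/2)$), which the paper also leaves unstated.
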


\begin{proof}
	Define the error signal
	\begequ
	\lab{e}
	\epsilon:= \begmat{x_3 \\ x_4}-\xi 
	\endequ
	and taking into account \eqref{dotx3x4}, \eqref{calm},  \eqref{dotxi} and \eqref{a} we obtain an LTV system $\dot \epsilon = A(t) \epsilon$. Now, from \eqref{dotphi} we see that $\Phi$ is the {\em state transition matrix} of the $\epsilon$ system. Consequently, there exists a {\em constant} vector $\theta \in \rea^2$ such that
	$$
	\epsilon = \Phi \theta,
	$$ 
	namely $\theta=\epsilon(0)$. We now have the following chain of implications
	\begalis{
		\epsilon = \Phi \theta & \Leftrightarrow \; \begmat{x_3 \\ x_4}=\xi+ \Phi \theta \qquad \qquad \qquad (\Leftarrow \eqref{e}) \\
		& \Rightarrow \; x^2_3 + x^2_4 =|\xi+ \Phi \theta   |^2   \quad \qquad \quad (\Leftarrow   |\cdot|^2)\\
		& \Leftrightarrow \; Y_1 = |\xi+ \Phi \theta   |^2  \qquad \qquad \qquad\;(\Leftarrow \eqref{Y}).
	}
	Notice that the right hand side of the first equivalence above proves property {\bf P1}. The proof of the property {\bf P2} follows developing the right hand side square above, rearranging terms and defining
	
	\begin{align}
	\nonumber
	y_E&:= Y_1 - |\xi|^2\\
	\nonumber
	\psi&:=
	\begin{bmatrix}
	2(\Phi_{11}\xi_1+\Phi_{21}\xi_2)\\
	2(\Phi_{12}\xi_1+\Phi_{22}\xi_2)\\
	2(\Phi_{11}\Phi_{12}+\Phi_{21}\Phi_{22})\\
	\Phi^2_{11}+\Phi^2_{21}\\
	\Phi^2_{12}+\Phi^2_{22}
	\end{bmatrix}.
	\end{align}
		
	The proof of the implication for the errors $\col(\tilde x_3, \tilde x_4)$ is obvious from \eqref{xxithe} and the definition of $\col(\hat x_3, \hat x_4)$ in \eqref{staobs}. To prove the claim for $\tilde x_1$ notice that using \eqref{Y} and the definition of $e^{Jx_1}$, we get
	\begalis{
		x_3Y_3-x_4Y_2 & =(x_3^2+x_4^2)\sin(x_1)\\
		& =Y_1\sin(x_1),
	}
	from which we obtain
	$$
	x_1=\arcsin\Big({x_3Y_3-x_4Y_2 \over Y_1}\Big).
	$$ 
\end{proof}

\subsection{DREM parameter estimator}
\lab{subsec52}
%
In view of the implication \eqref{concon} the remaining task to complete the observer design is to generate a {consistent} estimate for $\theta$. Towards this end, we dispose of the regressor equation \eqref{ye} that, unfortunately, is {\em nonlinear} in the unknown parameters $\theta$. Treating $\Theta$ as the unknown vector, it is possible to obtain an  {\em overparameterized linear} regression to which we can directly apply a classical gradient descent estimator, that is
\begequ
\lab{dotThe}
\dot {\hat \Theta}=-\Gamma \psi(\psi^\top   \hat \Theta - y_E),\;\Gamma>0.
\endequ
However, this approach has the following fundamental shortcoming. It is well-known  \cite[Theorem 2.5.1]{SASBOD} that a necessary and sufficient conditions for global (exponential) convergence of the gradient estimator is that the regressor $\psi$ satisfies a stringent {\em persistent excitation} requirement  \cite[Equation 2.5.3]{SASBOD}, which is not possible to satisfy in normal operation of the power system because of the overparameterization. To avoid this difficulty we propose here to use a DREM estimator that has the unique feature of generating $5$ new, one-dimensional linear regression equations to  {\em independently} estimate each of the parameters. This feature allows, on one hand, to estimate {\em only} the parameters $\theta$ and, on the other hand, to relax the excitation assumptions that guarantee its convergence.  This fact is illustrated in the simulations of Section \ref{sec5}. For further details on DREM the interested reader is refered to \cite{ARAetaltac}.

The first step to apply DREM to \eqref{ye} is to introduce a {\em linear, single-input $5$-output, bounded-input bounded-output (BIBO)--stable} operator $\calh$ and define the vector $Y_E \in \rea^5$ and the matrix $\Psi \in \rea^{q \times q}$ as
\begali{
	\nonumber
	Y_E & := \calh[y_E]\\
	\lab{YPsi}
	\Psi & :=\calh[\psi^\top].
}
Clearly, because of linearity and BIBO stability, these signals satisfy
\begequ
\label{extlre}
Y_E = \Psi \Theta.
\endequ 
At this point the key step of regressor ``mixing" of the DREM procedure is used to obtain a set of $5$ {\em scalar} equations as follows. First, recall that, for any (possibly singular) $q \times q$ matrix $M$ we have $\adj\{M\} M=\det\{M\}I_q$, where $\adj\{\cdot\}$ is the adjunct (also called ``adjugate") matrix. Now, multiplying from the left the vector equation \eqref{extlre} by the {\em adjunct matrix} of $\Psi$, we get
\begequ
\label{scalre}
\caly_i = \Delta \Theta_i,\quad i \in \{1,2,\dots,5\}
\endequ
where we have defined the signals
\begali{
	\nonumber
	\Delta & :=\det \{\Psi\} \in \rea\\
	\lab{delcaly}
	\caly &:= \adj\{\Psi\} Y_E \in \rea^5.
}

The availability of the scalar LREs \eqref{scalre} is the main feature of DREM that distinguishes it with respect to all other estimators and allows us to obtain significantly stronger results. Indeed, in DREM we propose the gradient-descent estimators\footnote{In the sequel, the quantifier $i \in \{1,2,\dots,5\}$ is omitted for brevity.} 
$$
\dot{\hat{\Theta}}_i = \gamma_i\Delta(\caly_i - \Delta \hat\Theta_i),
$$
with $\gamma_i >0$, which gives rise to the scalar error equations
$$
\dot {\tilde \Theta}_i=-\gamma_i \Delta^2 \tilde\Theta_i,
$$
whose explicit solution is 
\begali{
	\lab{solctpee}
	\tilde \Theta_i(t) = e^{-\gamma_i \int_0^t \Delta^2(s)ds} \tilde \Theta_i(0).
}
From direct inspection of \eqref{solctpee} we conclude that following {\em equivalence} holds
$$
\liminf \tilde \Theta_i(t)=0~~\Leftrightarrow~~\Delta(t) \notin \call_2,
$$ 
and convergence can be made {\em arbitrarily fast} simply increasing $\gamma_i$. 

\begrem
\lab{rem4}
It is clear from the definition of $\Theta$ in \eqref{The} that we are only interested in the first and second components of this vector.
\endrem
\begrem
\lab{rem17}
In \cite{Coretall} it was observed that, using Cramer$'$s  rule, the computation of the adjunct matrix $adj\{\Psi^T\}$ can be avoided. Indeed, the elements of the vector $Y$ can be computed as
\begin{align}
\caly _i= \det \{\Psi_{Y_i}^T \}
\end{align}
where the matrix $\Psi_{Y_i}$ is obtained replacing the $i$-th row of $\Psi$ by the vector $Y$ .
\endrem

\subsection{Main stability result}
\lab{subsec53}
%
We are now in position to present the main result of this paper, a {\em globally convergent} observer for the state of the SMIB power system \eqref{x} with measurable outputs \eqref{y}, with the required excitation conditions been rather weak. As indicated before, the state $x_2$ can be reconstructed with the I\&I observer of \cite[Lemma 1]{LORetal} and is omitted for brevity. 

\begin{proposition}\em
	\lab{pro2}
	Consider the SMIB power system (\ref{x}), \eqref{y} verifying Assumption \ref{ass1}. Fix a stable transfer matrix\footnote{The latter condition on the constants $d_i$ is necessary to avoid the possibility of $\Psi$ been singular.} 
	\begequ
	\lab{calhs}
	\calh(s)=\begmat{1\\{d_2 \over s+d_2} \\ \vdots \\ {d_5 \over s+d_5}},\;d_i>0,\;d_i \neq d_j,\;\forall i \neq j.
	\endequ
	Let the state observer be defined by  \eqref{Y}, \eqref{z0}, \eqref{dotxiphi}, \eqref{a}, \eqref{staobs}, \eqref{e}, \eqref{YPsi}, \eqref{delcaly} together with the parameter estimators
	\begali{
		\dot {\hat \theta}_k &=-\gamma_k \Delta (\Delta \hat{\theta}_k-\caly_k),\;\gamma_k>0,\;k=1,2.
		\lab{paa}
	} 
	If $\Delta \notin \call_2$ then
	$$
	\liminf \begmat{\tilde x_1(t) \\ \tilde x_3(t) \\ \tilde x_4(t)}=0,
	$$
	with all signals bounded.
\end{proposition}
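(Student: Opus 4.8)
The plan is to assemble the building blocks established so far and reduce the claim to two essentially independent issues: convergence of the parameter error and boundedness of all signals. Lemma \ref{lem3} already furnishes, through property \textbf{P1}, the GPEBO parameterization $\col(x_3,x_4)=\xi+\Phi\theta$ with $\theta$ constant, and through property \textbf{P2} the overparameterized nonlinear regression \eqref{ye}, $y_E=\psi^\top\Theta$. Property \textbf{P3}, in particular the implication \eqref{concon}, shows that it suffices to drive the parameter error to zero: once $\liminf\tilde\theta(t)=0$ is secured, the state errors $\col(\tilde x_1,\tilde x_3,\tilde x_4)$ vanish automatically. Hence the core of the proof is (i) to show that the estimator \eqref{paa} yields $\liminf\tilde\theta_k(t)=0$ under the hypothesis $\Delta\notin\call_2$, and (ii) to verify that every signal generated by the observer stays bounded.

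For (i) I would run the DREM chain verbatim. Treating $\Theta$ of \eqref{The} as the unknown, \eqref{ye} is linear in $\Theta$; applying the BIBO-stable operator $\calh$ of \eqref{calhs} produces the extended regression \eqref{extlre}, $Y_E=\Psi\Theta$, and left-multiplication by $\adj\{\Psi\}$, using $\adj\{\Psi\}\Psi=\det\{\Psi\}I_5$, gives the decoupled scalar equations \eqref{scalre}, $\caly_i=\Delta\Theta_i$. Because the first two entries of $\Theta$ are exactly $\theta_1,\theta_2$, the relations $\caly_1=\Delta\theta_1$ and $\caly_2=\Delta\theta_2$ are all that is needed (cf. Remark \ref{rem4}), which is why \eqref{paa} estimates only $k=1,2$. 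Substituting these into \eqref{paa} yields the scalar error dynamics $\dot{\tilde\theta}_k=-\gamma_k\Delta^2\tilde\theta_k$, whose explicit solution $\tilde\theta_k(t)=\exp\!\big(-\gamma_k\int_0^t\Delta^2(s)\,ds\big)\tilde\theta_k(0)$ converges to zero precisely when $\int_0^\infty\Delta^2(s)\,ds=\infty$, i.e. when $\Delta\notin\call_2$. Invoking \eqref{concon} then closes the convergence part.

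The delicate part is (ii), and it is where I expect the main obstacle. By Assumption \ref{ass1} the inputs $u$, the measurements $y$ and the physical state $x$ are bounded, and $Y_1$ is bounded away from zero by \eqref{Y1}; hence $z_0$, $y_5$ and $Y_1^{-1}$ are bounded and so is the matrix $A(t)$ in \eqref{a}. From here everything else cascades \emph{provided the state-transition matrix $\Phi$ of \eqref{dotphi} is bounded}: rearranging \textbf{P1} as $\xi=\col(x_3,x_4)-\Phi\theta$ and using boundedness of $\col(x_3,x_4)$ and of $\Phi$ bounds $\xi$; hence $\psi$ and $y_E$ are bounded; BIBO-stability of $\calh$ then bounds $\Psi$, $Y_E$, $\Delta=\det\{\Psi\}$ and $\caly=\adj\{\Psi\}Y_E$; the error ODE keeps $\tilde\theta_k$ (and thus $\hat\theta_k$) bounded; and finally $\col(\hat x_3,\hat x_4)=\xi+\Phi\hat\theta$, together with the $\arcsin$ map in \eqref{staobs}, is bounded. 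Moreover, since $\col(\tilde x_3,\tilde x_4)=\Phi\tilde\theta$, the convergence $\tilde x_3,\tilde x_4\to0$ likewise hinges on $\Phi$ being bounded while $\tilde\theta\to0$.

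Thus the whole result rests on establishing that the LTV flow $\dot\epsilon=A(t)\epsilon$ has a uniformly bounded transition matrix $\Phi$. I stress that this is \emph{not} automatic: boundedness of $A(t)$ alone does not bound $\Phi$, and boundedness of the physical pair $(x_3,x_4)$ only constrains the single trajectory $\Phi\theta=\epsilon$, not the full matrix. I would obtain the bound by exploiting the structure of $A(t)$ — e.g. a Lyapunov/contraction estimate showing that under normal operation the symmetric part of $A(t)$ is negative definite, which holds, for instance, when the damping constants $a_1,a_2$ dominate the measurable corrections $\tfrac{x_q'b_i}{Y_1}z_0$, so that $\Phi$ is in fact exponentially decaying and hence bounded. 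Should such a bound fail to hold uniformly, it would have to be imposed as a mild regularity condition on the admissible trajectories, after which the cascade above completes the proof.
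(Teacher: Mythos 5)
Your proof is correct and, for the convergence claim, coincides with the paper's own argument: the paper's entire proof is the observation that the derivations of Subsection \ref{subsec52} yield $\dot{\tilde\theta}_k=-\gamma_k\Delta^2\tilde\theta_k$, $k=1,2$, so that $\Delta\notin\call_2$ gives $\tilde\theta(t)\to0$, after which \eqref{concon} is invoked --- exactly your part (i). Where you go beyond the paper is part (ii): the paper simply asserts ``with all signals bounded'' and offers no argument, and the proof of \eqref{concon} inside Lemma \ref{lem3} treats the convergence $\col(\tilde x_3,\tilde x_4)=\Phi\tilde\theta\to0$ as obvious, which tacitly requires the transition matrix $\Phi$ of \eqref{dotphi} to be bounded. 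Your observation that this is \emph{not} implied by boundedness of $A(t)$, nor by boundedness of the physical trajectory (which only controls the single solution $\Phi\theta=\epsilon$, not the full matrix), identifies a genuine gap in the paper's argument, not in yours. Your proposed fix is also viable in concrete form: with the weighted function $V=\tfrac12\big(\epsilon_1^2/b_2+\epsilon_2^2/b_1\big)$ the off-diagonal terms of \eqref{a} cancel exactly, giving $\dot V=\big(-\tfrac{a_2}{b_2}+\tfrac{x_q'z_0}{Y_1}\big)\epsilon_1^2+\big(-\tfrac{a_1}{b_1}+\tfrac{x_q'z_0}{Y_1}\big)\epsilon_2^2$, so $\Phi$ is bounded (indeed exponentially contracting) whenever the measurable ratio $x_q'z_0/Y_1$ remains below $\min\{a_1/b_1,\,a_2/b_2\}=\min\{x_d/(x_d-x_d'),\,x_q/(x_q-x_q')\}>1$ along trajectories --- an operating-condition hypothesis which the paper should have stated explicitly (or folded into Assumption \ref{ass1}) but did not.
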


\begin{proof}
	Given the derivations of Subsection \ref{subsec52} we get the parameter estimator error equations
	$$
	\dot {\tilde \theta}_k =-\gamma_k \Delta^2 \tilde \theta_k,\;k=1,2.
	$$
	Clearly, with the standing assumption on $\Delta$, we have that $\tilde \theta(t) \to 0$. The proof is completed invoking \eqref{concon}.
	
\end{proof}

\begrem
\lab{rem5}
Although the construction of DREM allows for the use of general, LTV, BIBO-stable operators $\calh$, for the sake of simplicity we consider here the use of simple LTI filters. Moreover, we take the first element of the matrix to be the identity. 
\endrem
%
\section{Simulation Results}
\lab{sec6}
In this section we present some simulations that illustrate the performance of the observer of the states $(x_3,x_4)$ of Proposition \ref{pro1}, which combines GPEBO with DREM. For the sake of comparison we also show the simulation results of GPEBO with the overparameterized parameter estimator \eqref{dotThe} and a simple {\em state} observer directly derived from optimization considerations.

\subsection{Single Machine Infinite Bus}
We simulated the system \eqref{x} with the parameters 
$a_0=13.2893$, $a_1= 0.268$,$a_2=7.7462$, $b_0=6.6447$,  $b_1=0.1564$, $b_2=4.5204$, $c_1=0.1116$,
obtained from Table \ref{tab:params} with $u_1=u_2=0.1$. The systems initial conditions were set to $x(0)=\col(0.1,0.2,0.4,0.3)$. The initial conditions of {\em all} the states of the three observers were set to zero. 

\begin{table}
	\centering
	\begin{tabular}{|l|l|l|l|}
		\hline
		Symbol & Description& Value & Unit \\  	\hline
		$D$ &Damping factor& 2 & pu\\
		$H$ &Inertia constant & 23.64 &sec \\
		$k$ & Tuning parameter & 80 & -\\
		$T_{d0}'$&Direct-axis transient  &  8.96&sec\\
		&open-circuit time constant&&\\
		$x_d$ &Direct-axis reactance  & 0.146&  pu\\
		$x_d'$&Direct-axis transient reactance& 0.0608& pu\\
		$Y$ &Stator admittance& 16.45& pu\\ 
		$\omega_s$&Nominal synchronous speed&314.16& rad/sec\\
		\hline 	
	\end{tabular}
	\caption{Parameters for the SMIB system \eqref{x}.}
	\label{tab:params}
\end{table} 

\subsubsection{GPEBO+DREM of Proposition \ref{pro1}}
\label{subsec61}

The parameters of the transfer matrix \eqref{calhs} were chosen as $d_2=2,\;d_3=4,\;d_4=6,\;d_5=8$. 

Figures  \ref{fig1} and \ref{fig2} show the transients of $x_3$ and $x_4$ and their observed values  $\hat x_3$ and $\hat x_4$ with the adaptation gains $\gamma_1=\gamma_2=:\gamma$ and different values for $\gamma$. As expected, increasing $\gamma$ speeds-up the convergence generating some mild overshoots in $\hat x_3$.

\begin{figure}[h]
	\centering
	\includegraphics[width=1\linewidth]{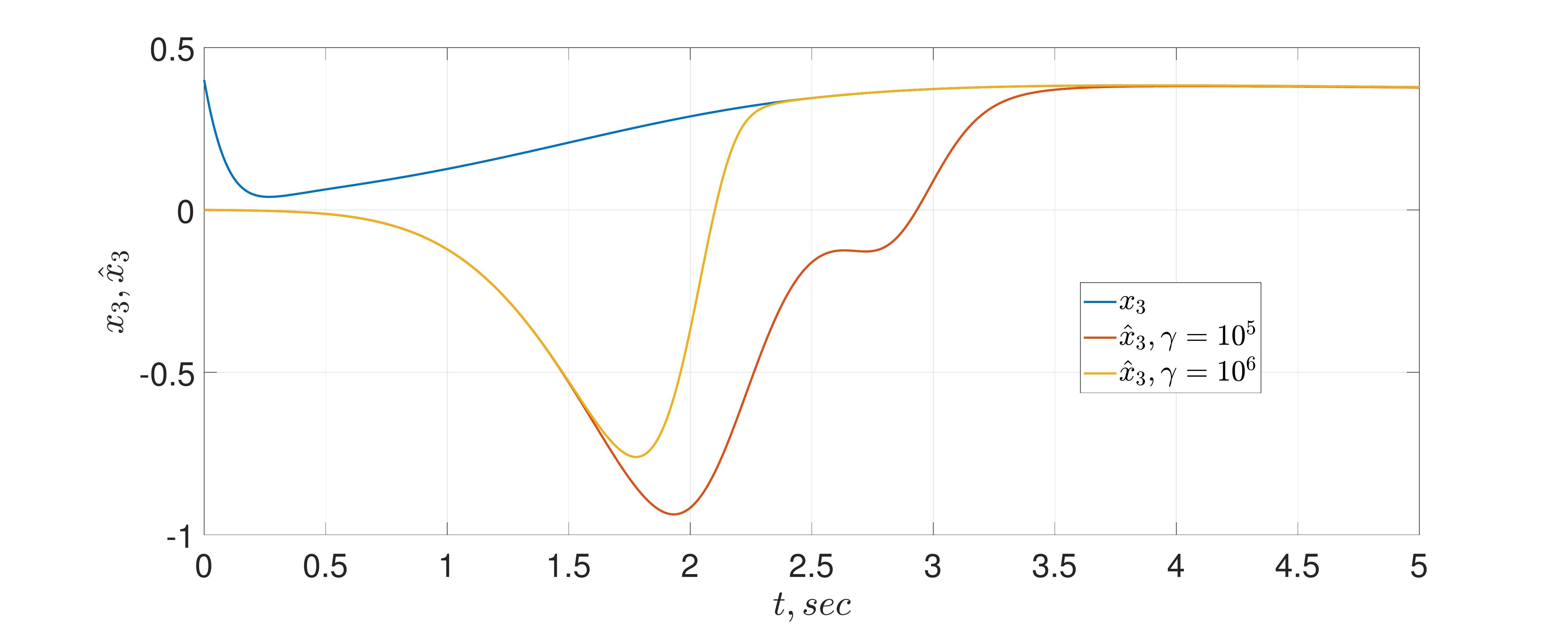}
	\caption{Transients of $x_3$ and $\hat{x}_3$ of GPEBO+DREM for different values of $\gamma$}
	\label{fig1}
\end{figure}

\begin{figure}[h]
	\centering
	\includegraphics[width=1\linewidth]{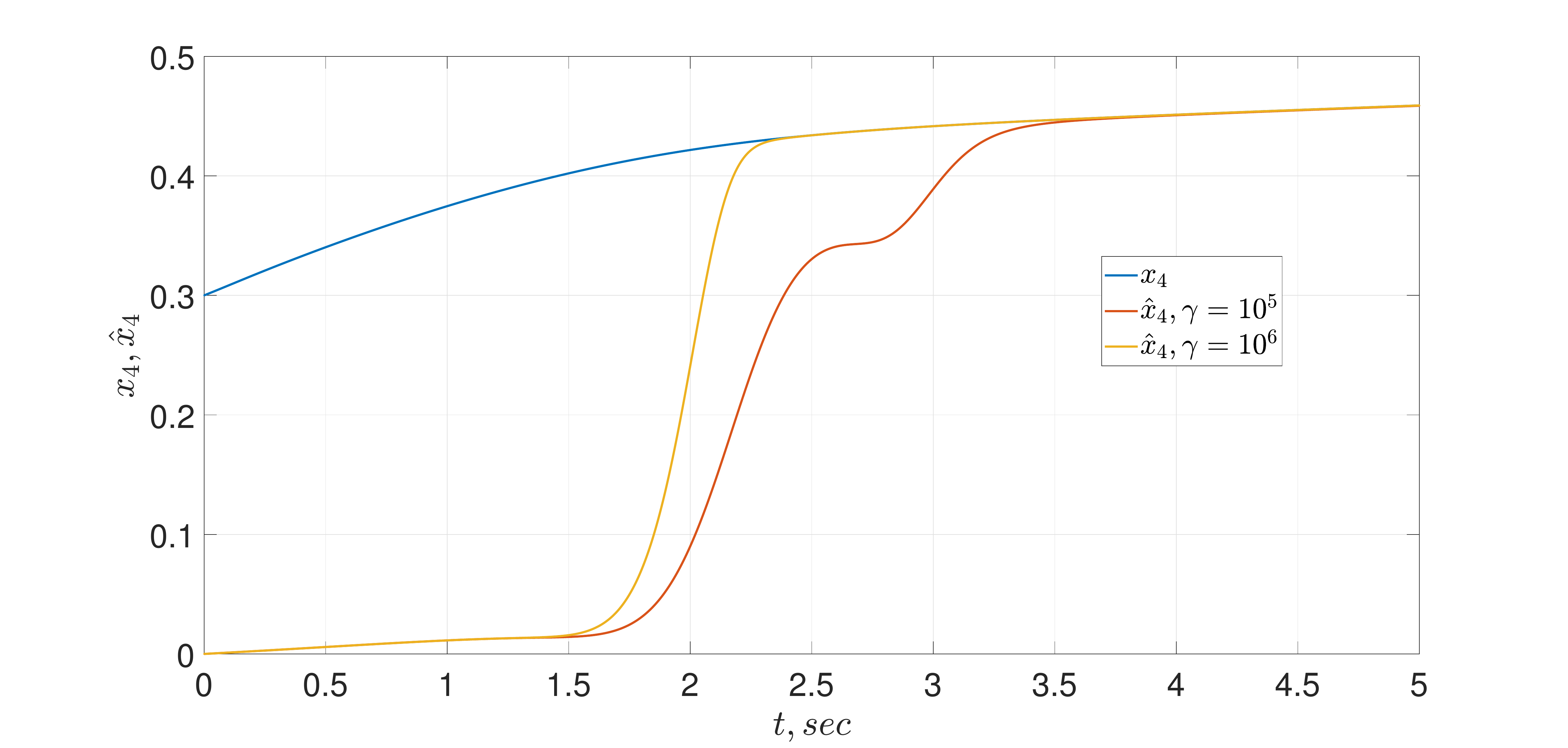}
	\caption{Transients of $x_4$ and $\hat{x}_4$ of GPEBO+DREM  for different values of $\gamma$}
	\label{fig2}
\end{figure}

\subsubsection{GPEBO with overparameterized estimator \eqref{dotThe}}
\label{subsec62}
%
In this subsection we show that the standard gradient estimator \eqref{dotThe} for the overparameterized regression is inadequate. Define the vector 
	\begequ
\lab{error}
	e:=\begmat{\Theta_1\Theta_2-\Theta_3 \\ \Theta_1-\Theta_4^2\\ \Theta_2-\Theta_5^2}.
\endequ
From the definition of the vector $\Theta$ in \eqref{The} we have that $e \equiv 0$. Figures  \ref{err1}  and \ref{err2} show the transients of the estimated vector $\hat{e}$ with the adaptation gains $\Gamma=10^6 I_5$ and  $\Gamma=10^8 I_5$, which does not converge to zero---proving that the parameters do not converge to their true values. Several different values of $\Gamma$ were tried, observing always an erroneous behavior.

\begin{figure}[h]
	\centering
	\includegraphics[width=1\linewidth]{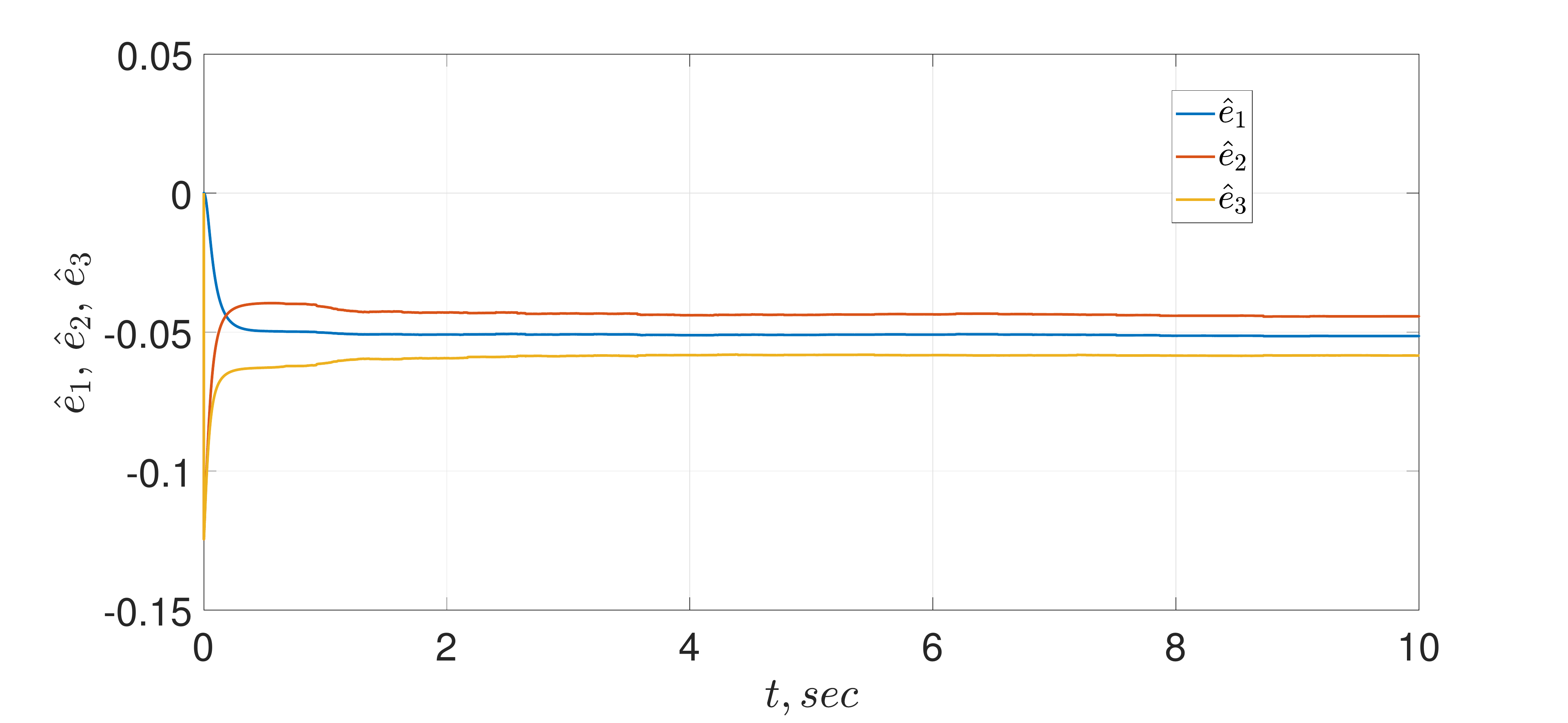}
	\caption{Transients of $\hat{e}$ for $\Gamma=10^6 I_5$ of the overparameterized estimator \eqref{dotThe}  }
	\label{err1}
\end{figure}

\begin{figure}[h]
	\centering
	\includegraphics[width=1\linewidth]{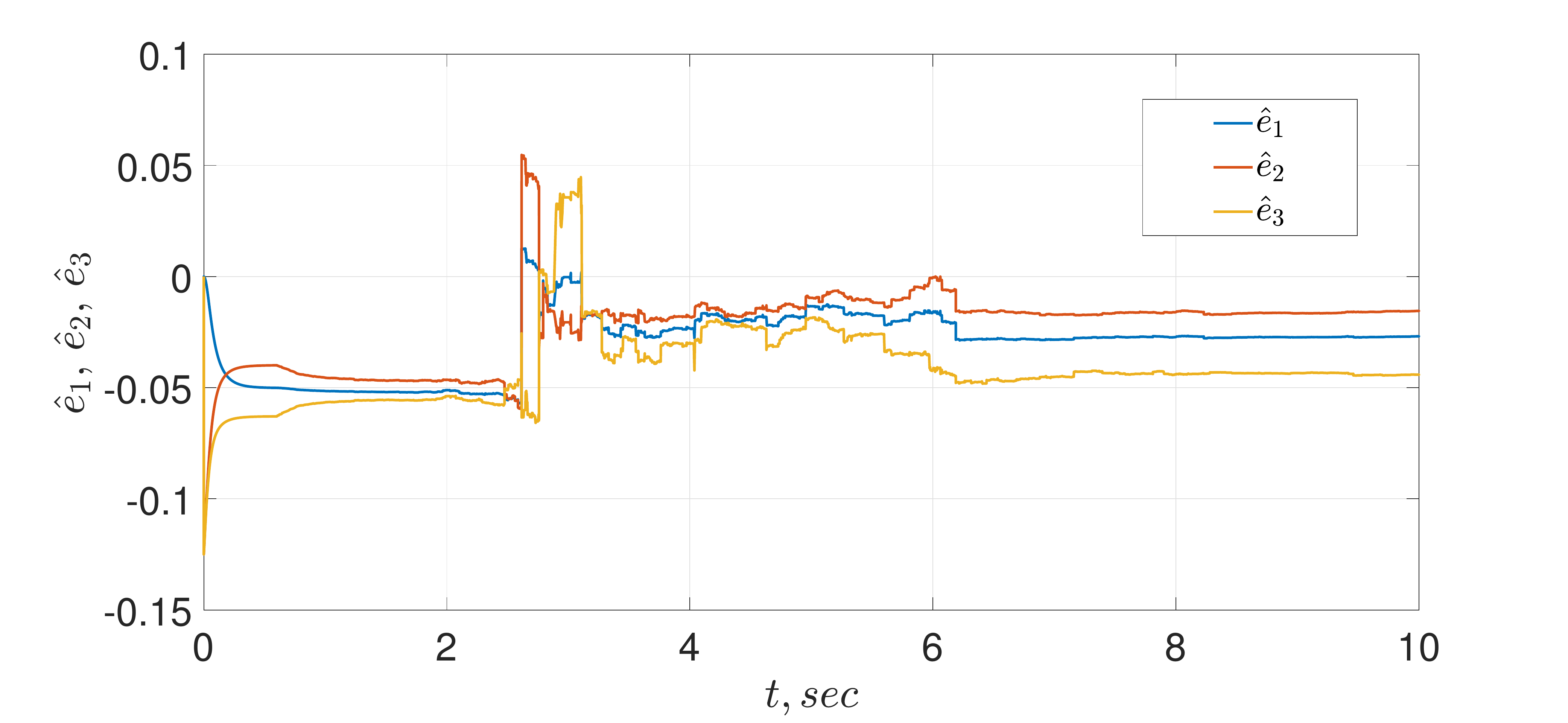}
	\caption{Transients of $\hat{e}$ for $\Gamma=10^8 I_5$ of the overparameterized estimator \eqref{dotThe} }
	\label{err2}
\end{figure}

\subsubsection{Gradient-descent state estimation algorithm}
\label{subsec63}
In this subsection we propose to design gradient-descent algorithms, directly for observation of the {\em states} $(x_3,x_4)$, proceeding from the state-to-output map \eqref{Y1}. The gradient descent-based approach to state observation was, apparently, first proposed in \cite{SHISUZNUK}, and has been pursued recently by several researchers \cite{BERPRA,FORetal,LAGetal,ORTetalcst,SHI}. 

The construction proceeds as follows. Given the criterion
$$
\calt(x_3,x_4):={1 \over 4}[Y_1-(x_3^2+x_4^2)]^2,
$$
with $Y_1$ given in \eqref{Y1}, propose an observer
$$
\begmat{\dot {\hat x}_3 \\ \dot {\hat x}_4}=-\Gamma \nabla \calt(\hat x_3,\hat x_4)+\Bigg(A(t)\begmat{\hat x_3 \\ \hat x_4} + \begmat{ 0 \\ c_1 u_2}\Bigg)
$$
with $\Gamma \in \rea^{2 \times 2}$ positive definite. That is,
	\begequ
\lab{gdst}
\begmat{\dot {\hat x}_3 \\ \dot {\hat x}_4}=\Gamma [Y_1-(\hat x_3^2+\hat x_4^2)]\begmat{\hat x_3 \\ \hat x_4}+\Bigg(A(t)\begmat{\hat x_3 \\ \hat x_4} + \begmat{ 0 \\ c_1 u_2}\Bigg)
\endequ
The local stability properties of this observer can be studied using the Taylor-expansion based analysis proposed in \cite{SHI}.

Figures  \ref{fig3} and \ref{fig4} show the transients of $x_3$ and $x_4$ and their observed values  $\hat x_3$ and $\hat x_4$ with  $\Gamma=\gamma I$ and different values of $\gamma$. Interestingly, the state estimation errors converge to zero, even for large initial conditions errors. However, the transient behavior is significantly slower that the one of GPEBO+DREM---notice the difference in time scales.  

\begin{figure}[h]
	\centering
	\includegraphics[width=1\linewidth]{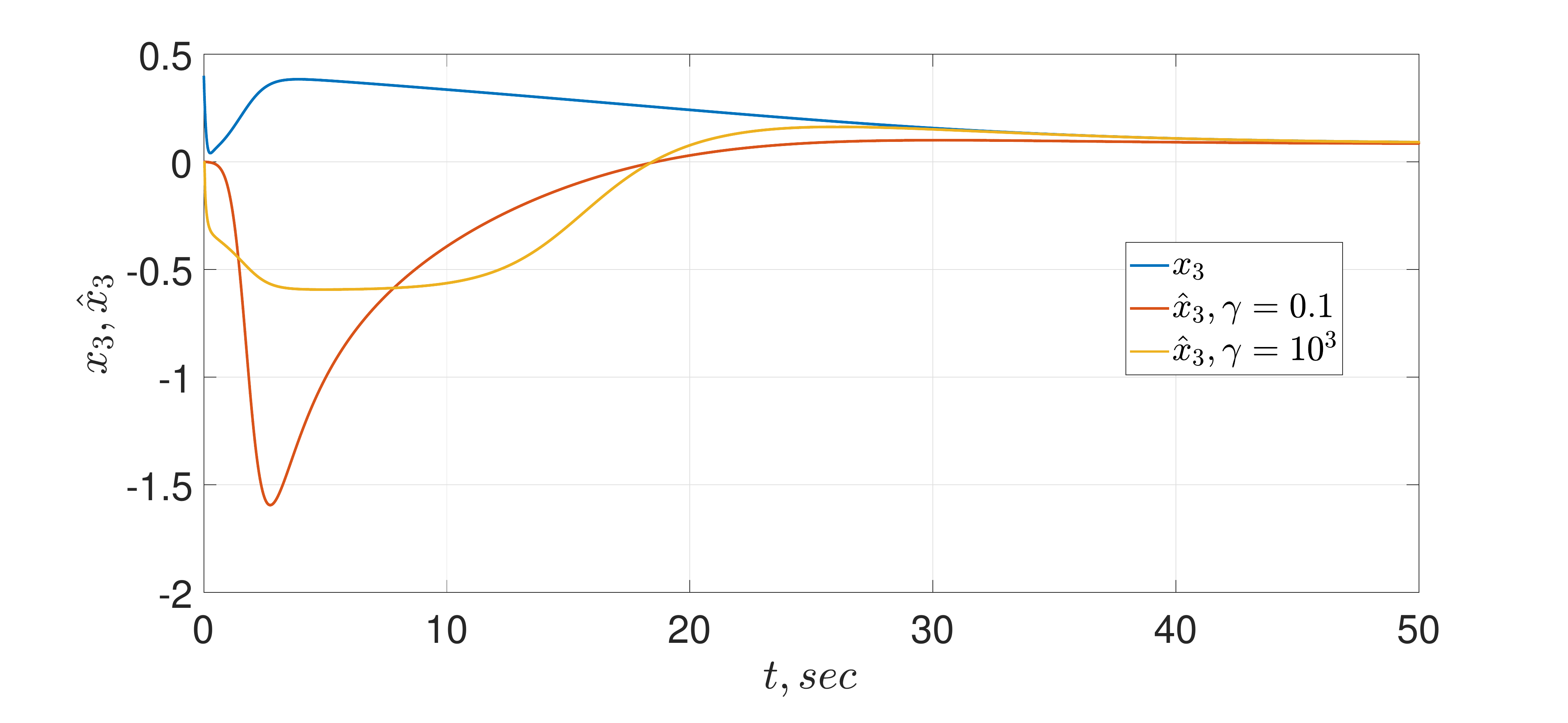}
	\caption{Transients of $x_3$ and $\hat{x}_3$ of the gradient descent observer for different values of $\Gamma$}
	\label{fig3}
\end{figure}

\begin{figure}[h]
	\centering
	\includegraphics[width=1\linewidth]{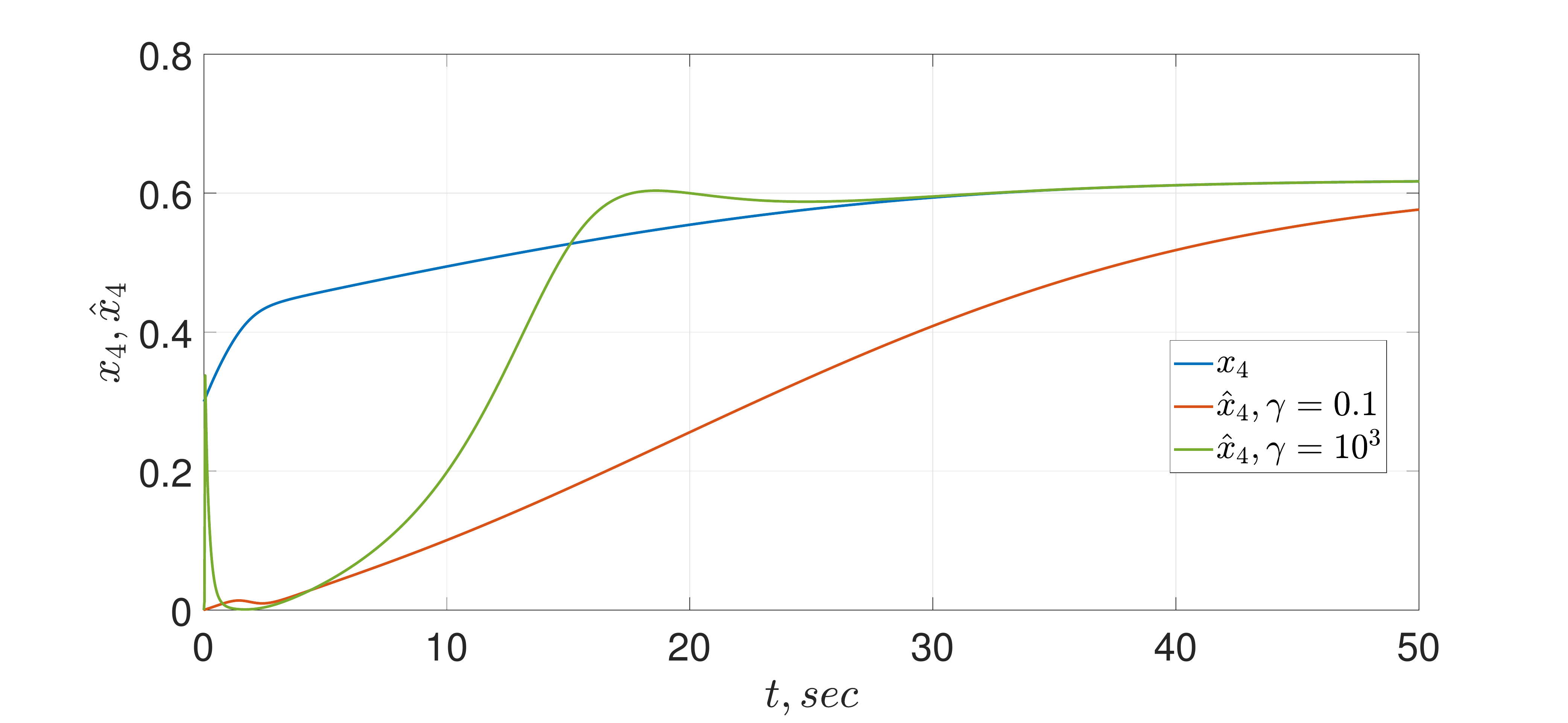}
	\caption{Transients of $x_4$ and $\hat{x}_4$ of the gradient descent observer for different values of $\Gamma$}
	\label{fig4}
\end{figure}
{
\subsection{Multi-Machine Power System}
We simulated the well-known New England IEEE 39 bus system shown in \autoref{fig:IEEE39Bus}, with the parameters provided in \cite{hiskens13}. All synchronous generators are represented by the fourth-order flux-decay model \eqref{x} and are equipped with automatic voltage regulators and power system stabilizers according to \cite{hiskens13}. 
To monitor the system, we assume that a PMU is installed at the terminal bus of generators 6. 

As a test case we used minor load variations in the system. The resulting frequency variations are within $60\pm0.020$~[Hz] and hence consistent with those during regular operation of transmission grids \cite{weissbach09}.

\begin{figure}
	\centering
	\includegraphics[width=1\linewidth]{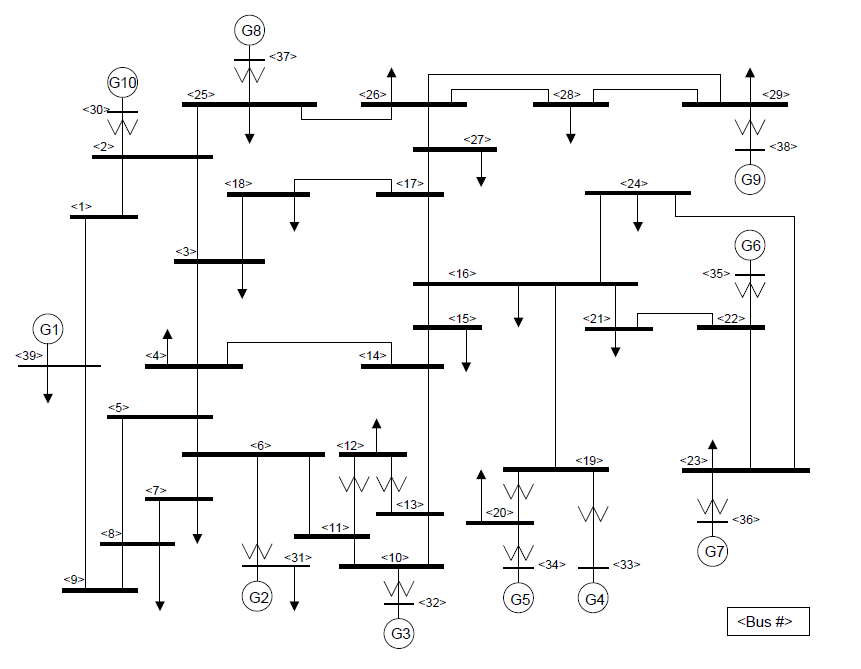}
	\caption{New England IEEE 39 bus system (figure taken from \cite{hiskens13}).}
	\label{fig:IEEE39Bus}
\end{figure}

\subsubsection{GPEBO+DREM and algebraic observer of Proposition \ref{pro2}}
The parameters of the transfer matrix \eqref{calhs} were chosen as $d_2=2,\;d_3=4,\;d_4=6,\;d_5=8$. Different values were chosen for the adaptation gains $\gamma_i=\gamma$, $i = 1, ... ,5 $. In \autoref{fig1_mm} the simulation results for $x_1, x_2, x_3 $ and the state estimation of the observer of Proposition \ref{pro2} are shown. As seen from the figure consistent estimation of the state variables is achieved.
\begin{figure}[h]
	\centering
	\includegraphics[width=.85\linewidth]{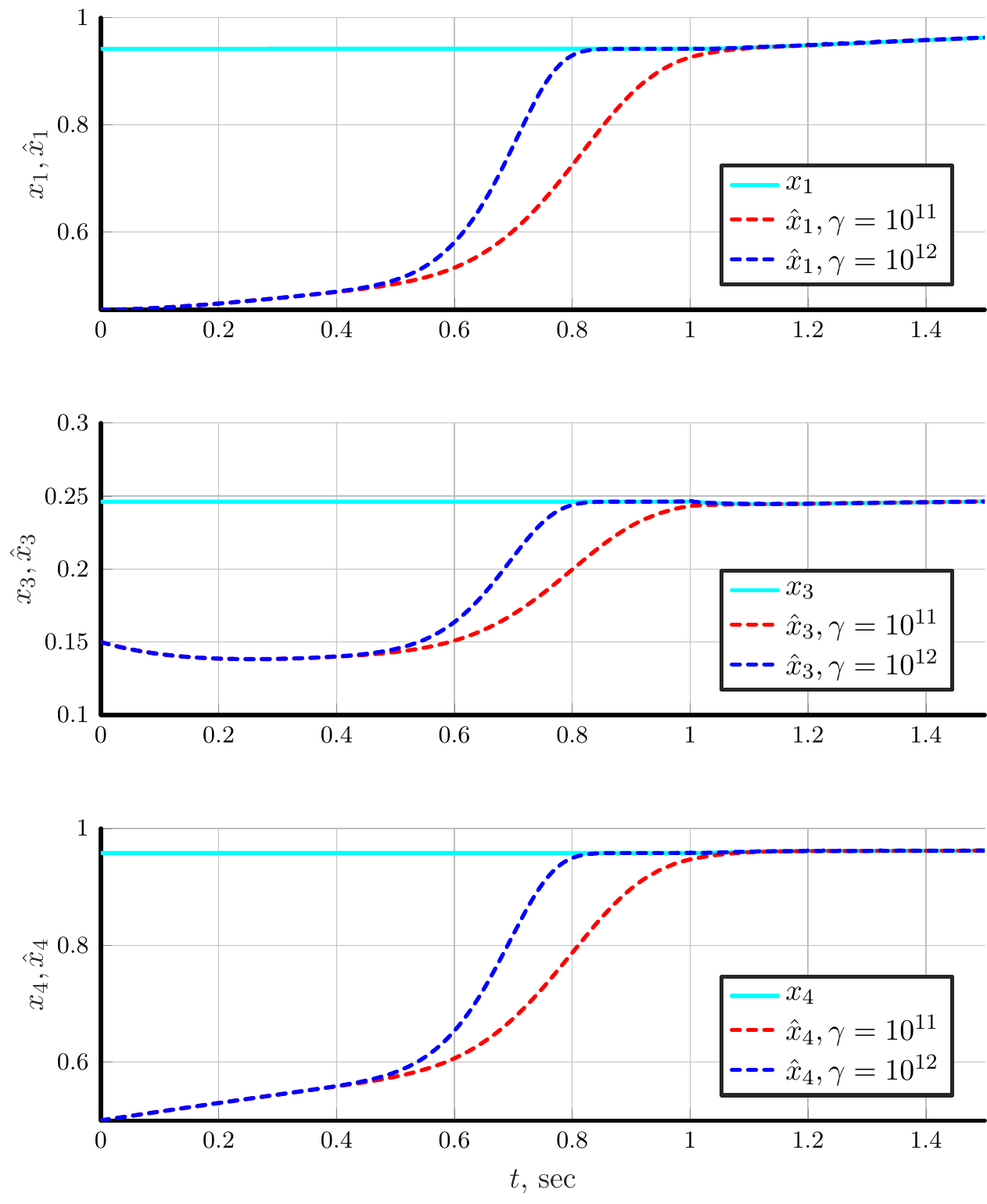}
	\caption{Transients of the GPEBO+DREM  observer, with different values of $\gamma$, for generator 6 in the presence of load variations.}
	\label{fig1_mm}
\end{figure}

\subsubsection{GPEBO with overparameterized estimator \eqref{dotThe}}
The overparameterized estimator was simulated using different values for the adaptation gain $\Gamma = \gamma I_5$. The elements of error vector defined in \eqref{error} are given in \autoref{fig2_mm}, showing that convergence is not achieved.
\begin{figure}[h]
	\centering
	\includegraphics[width=.85\linewidth]{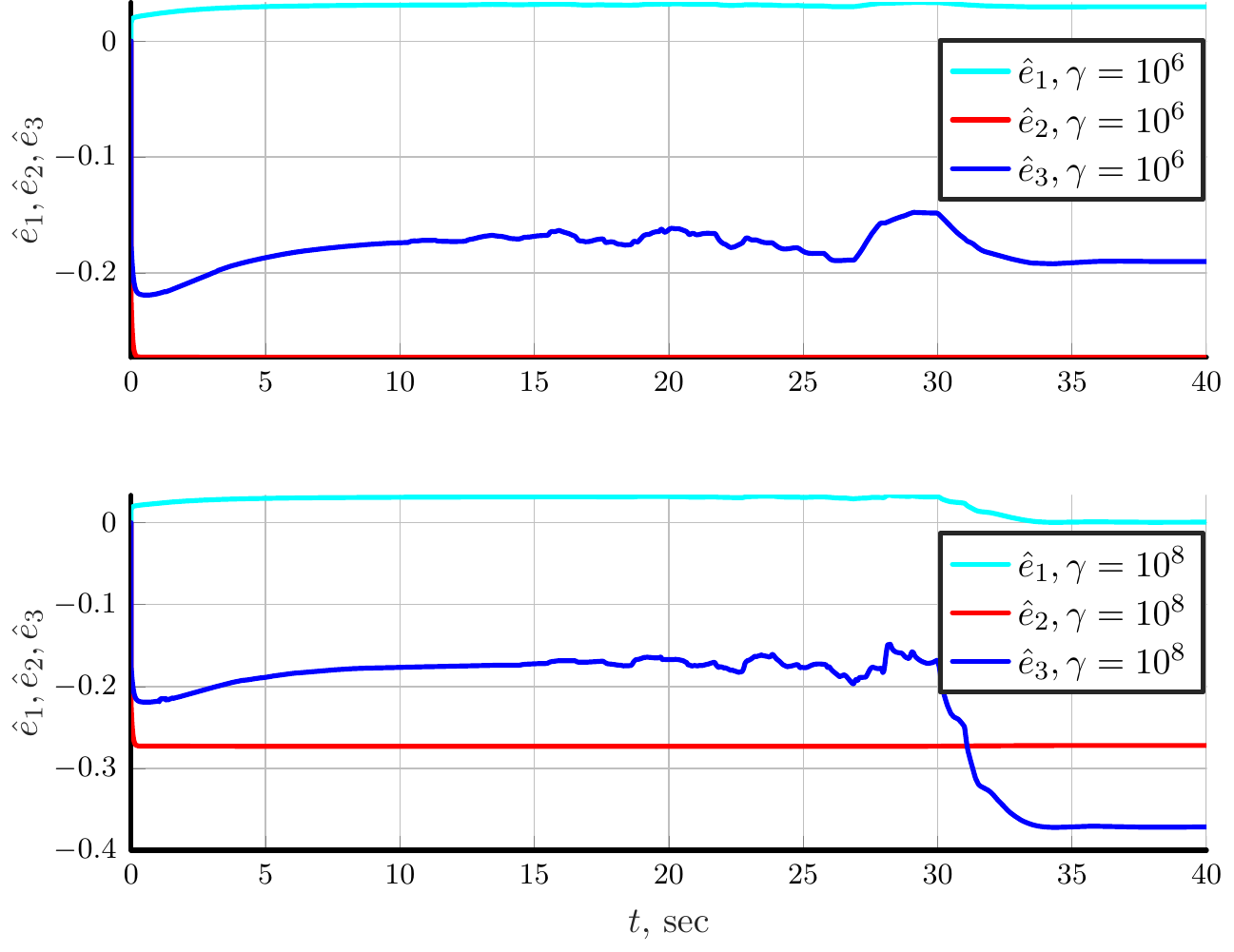}
	\caption{Transients of the error vector \eqref{error}  of the overparameterized estimator \eqref{dotThe} for generator 6 in the presence of load variations.}
	\label{fig2_mm}
\end{figure}

\subsubsection{Gradient-descent state estimation algorithm \eqref{gdst}}
In \autoref{fig3_mm} the simulation results for the gradient-descent state estimation algorithm introduced in \eqref{gdst} are shown using different values for the gain $\Gamma = \gamma I_2$. As seen from the figure the transient behavior is very good, mainly due to the rapid change of the state variables $x_3$ and $x_4$ that provide the required excitation to estimate the gradient. 
\begin{figure}[h]
	\centering
	\includegraphics[width=.85\linewidth]{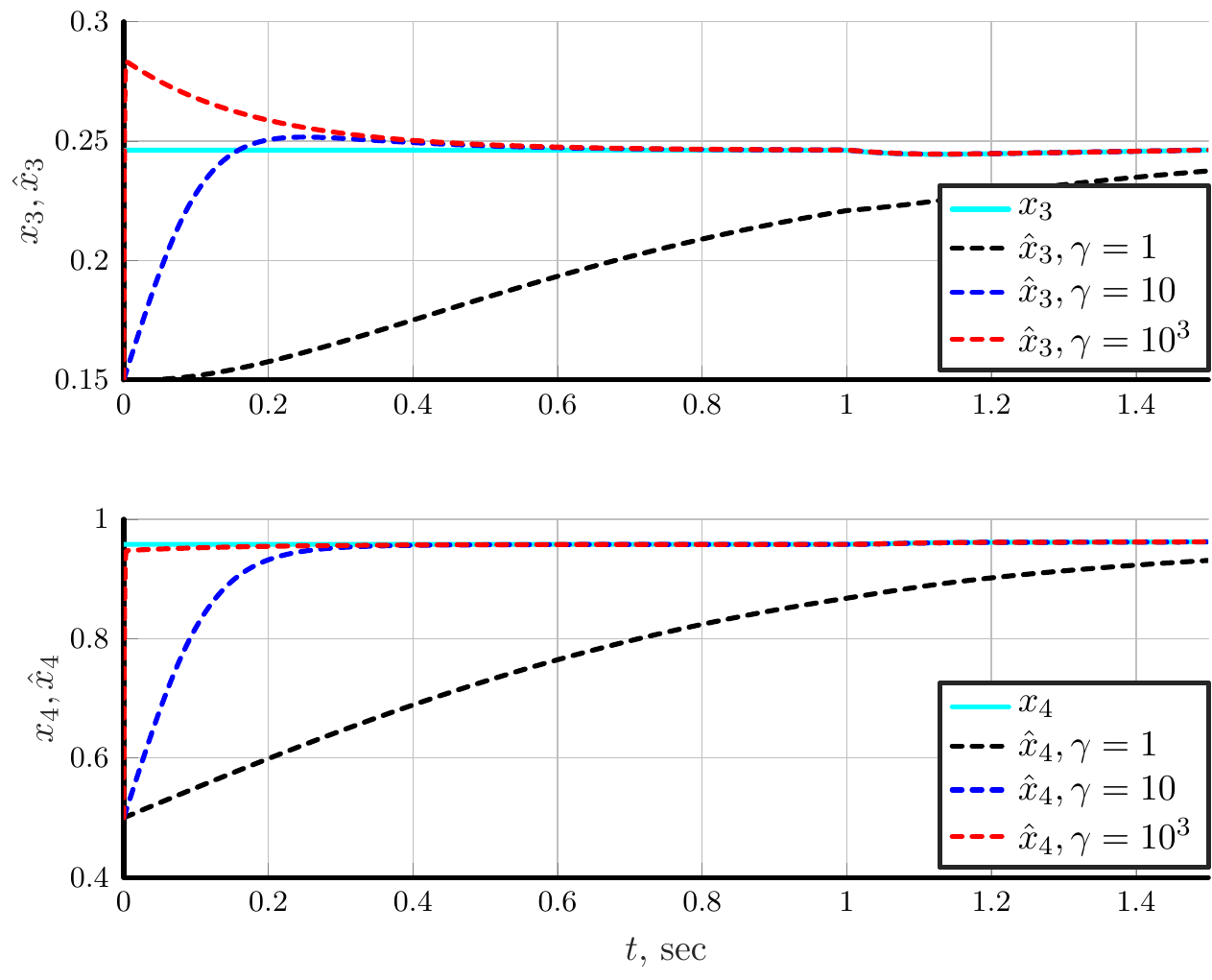}
	\caption{Transients of the gradient descent observer \eqref{gdst} of generator 6 in the presence of load variations.}
	\label{fig3_mm}
\end{figure}

}
\section{Conclusion and Future Research}
\lab{sec7}

We have proposed s globally convergent observer for the state estimation, from PMU measurements, of multimachine power systems described by the widely popular fourth order model \eqref{x}. It is shown that we can concentrate on the observation of the states $(x_3,x_4)$ and compute $x_1$ from an explicit algebraic equation. 

For the observation of $(x_3,x_4)$ we have also proposed a gradient-descent based observer that, in spite of the lack of a global convergence proof, performs quite well in a realistic multimachine scenario. A topic of current research is to assess the convergence properties of this observer---beyond the local analysis based on linearization of \cite{SHI}. 

Another interesting possibility motivated by  \eqref{Y} is to design a gradient-descent observer for  $(x_1,x_3,x_4)$ fixing a cost function
$$
\calt_N(x_1,x_3,x_4):=\Bigg|Y-\begmat{x^2_3+x^2_4 \\ e^{Jx_1} \begmat{x_3 \\ x_4}}\Bigg|^2,
$$ 
and going in the direction of descent of the gradient---with respect to $(x_1,x_3,x_4)$---of this cost function. We hope to be able to report this result in the near future.

\section*{Acknowledgment}

This paper is partially supported by the Ministry of Science and Higher Education of Russian Federation, passport of goszadanie no. 2019-0898, and by Government of Russian
Federation (Grant 08-08).

\ifCLASSOPTIONcaptionsoff
  \newpage
\fi

%

\begin{IEEEbiography}[{\includegraphics[width=1in,height=1.25in,clip,keepaspectratio]{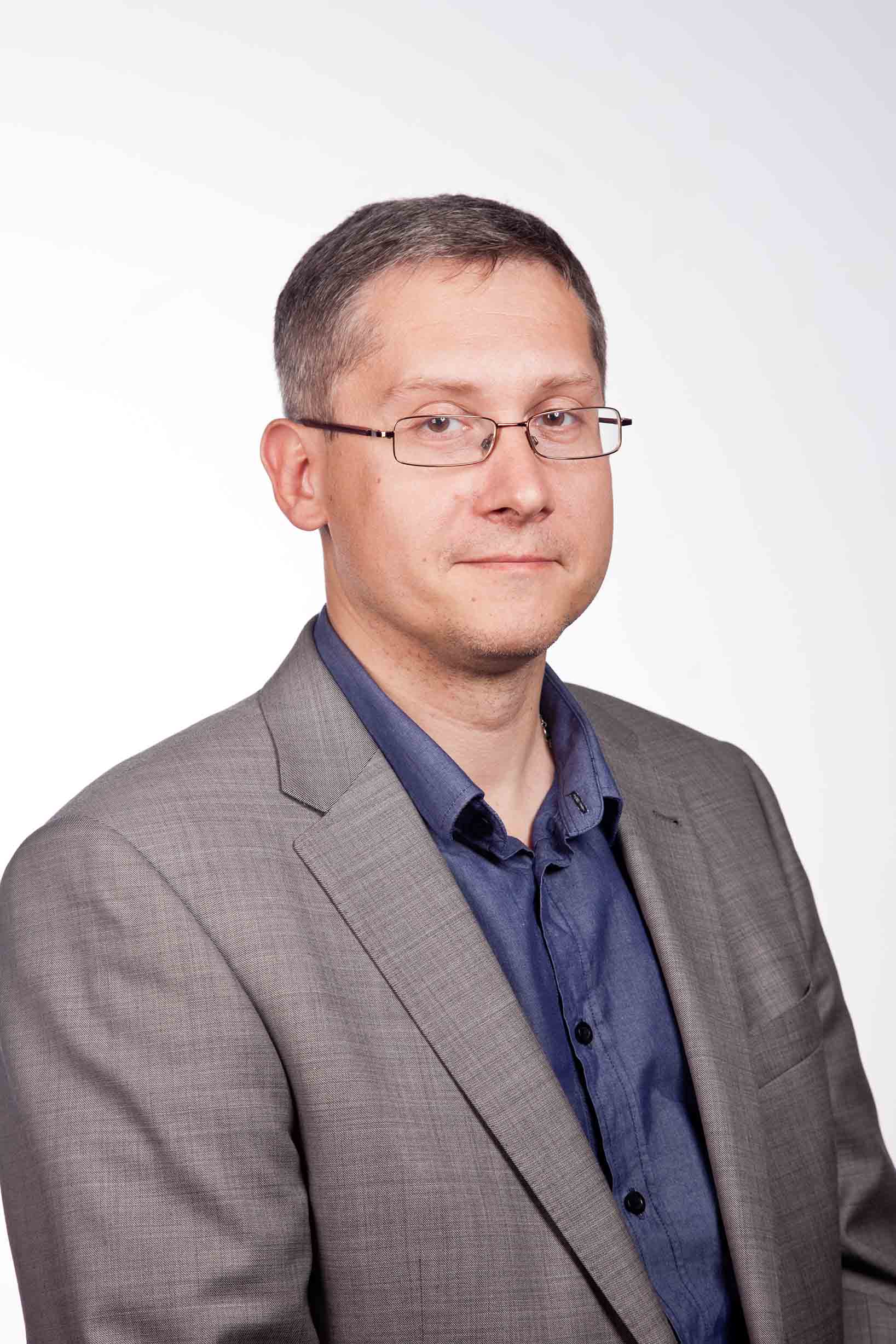}}]{Alexey Bobtsov}

(SM$'$10) received the M.S. degree in electrical engineering from ITMO University, St. Petersburg, Russia in 1996, received his PhD in 1999 and the degree of Doctor of Science (habilitation thesis) in 2007 from the same University.
From November 1999 to December 2000 he served
as Assistant Lecturer of Department of Automation
and Remote Control. From December 2000 to May
2007 Dr. Bobtsov served as Associate Professor of
Department of Control Systems and Informatics. In
May 2007 Dr. Bobtsov was appointed as Professor
of Department of Control Systems and Informatics. In September 2008 he was elected as the Dean of Computer Technologies and Control Faculty. He
is currently the Dean of School of Computer Science and Control at ITMO University. He is a Senior Member of IEEE since 2010. He is a Member of International Public Association Academy of Navigation and Motion Control.
He is coauthor of more than 300 journal and conference papers, 5 patents, 15 books and textbooks. His research interests are in fields of nonlinear and adaptive control, control of oscillatory and chaotic systems and computeraided control systems design with applications to mechanical and robotic systems.
\end{IEEEbiography}


\begin{IEEEbiography}[{\includegraphics[width=1in,height=1.25in,clip,keepaspectratio]{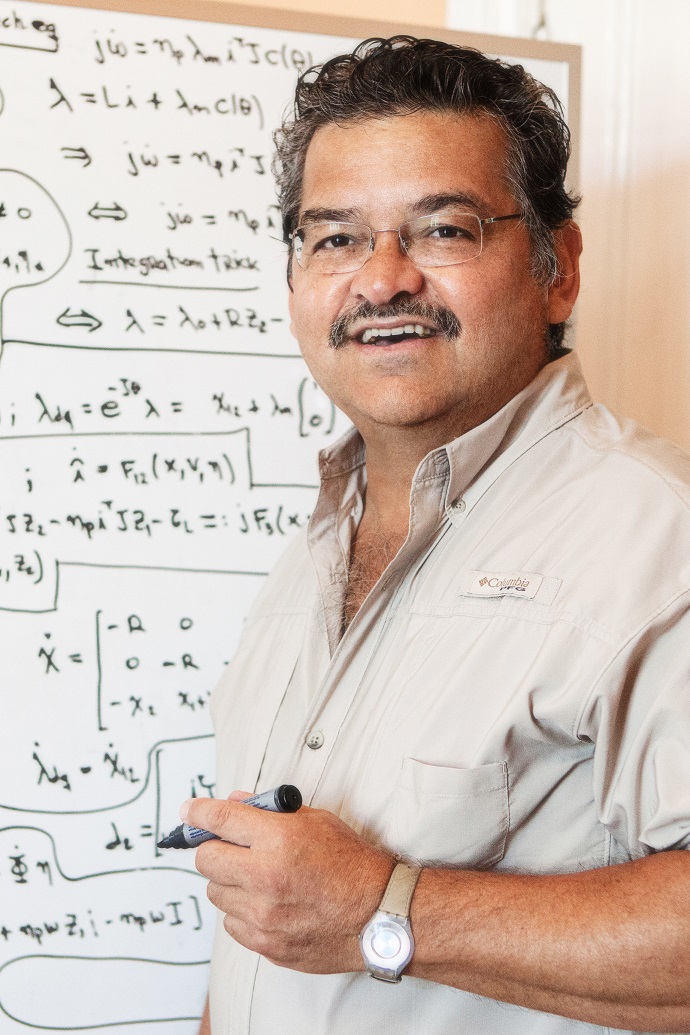}}]{Romeo Ortega}
	
(S$'$81, M$'$85, SM$'$98, F$'$99, LF$'20$) was
born in Mexico. He obtained his BSc in Electrical and Mechanical Engineering from the National
University of Mexico, Master of Engineering from
Polytechnical Institute of Leningrad, USSR, and the
Docteur D?Etat from the Polytechnical Institute of
Grenoble, France in 1974, 1978 and 1984 respectively.
He then joined the National University of Mexico,
where he worked until 1989. He was a Visiting
Professor at the University of Illinois in 1987-88 and
at the McGill University in 1991-1992, and a Fellow of the Japan Society for
Promotion of Science in 1990-1991. He has been a member of the French
National Researcher Council (CNRS) since June 1992. Currently he is in
the Laboratoire de Signaux et Systemes (CENTRALE-SUPELEC) in Gif-sur-Yvette. His
research interests are in the fields of nonlinear and adaptive control, with
special emphasis on applications.
Dr Ortega has published three books and more than 350 scientific papers
in international journals, with an h-index of 83. He has supervised 35 PhD
thesis. He has served as chairman in several IFAC and IEEE committees and
participated in various editorial boards. Currently, he is Editor in Chief of Int.
J. on Adaptive Control and Signal Processing.
\end{IEEEbiography}

\begin{IEEEbiography}[{\includegraphics[width=1in,height=1.25in,clip,keepaspectratio]{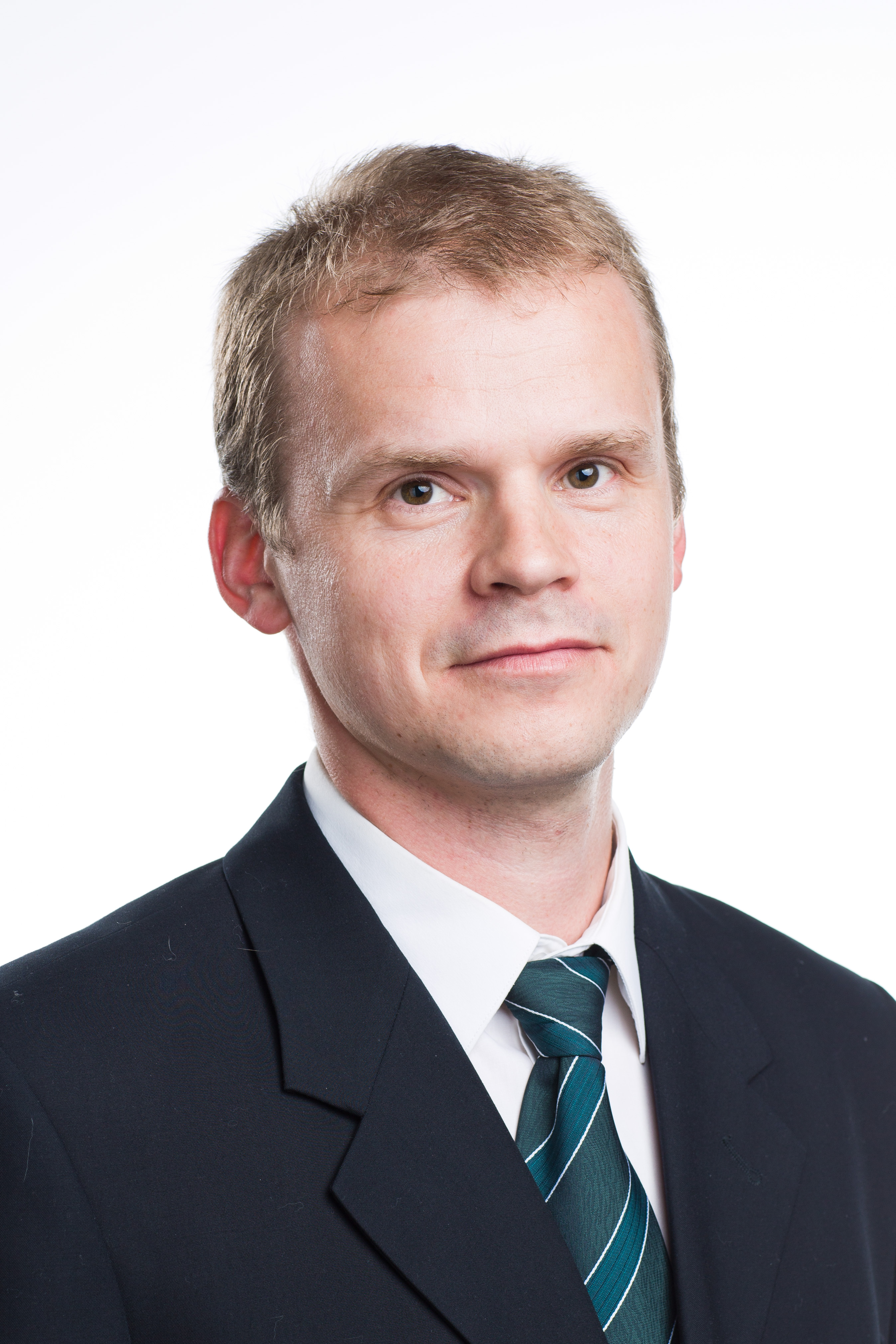}}]{Nikolay Nikolaev}

	received the M.S. degree in
	electrical engineering from ITMO University, St.
	Petersburg, Russia in 2003, received his PhD in 2006
	from the same University. From 2002 until 2013
	he worked as Engineer of Department of Control
	Systems Design for Power Plants at JSC Kirovsky
	Zavod (Kirov Plant). From 2013 he is an Assistant
	Professor in Department of Control Systems and
	Robotics from ITMO University. He is a Member
	of IEEE since 2006. His research interests are in
	fields of nonlinear and adaptive control.
\end{IEEEbiography}

\begin{IEEEbiography}[{\includegraphics[width=1in,height=1.25in,clip,keepaspectratio]{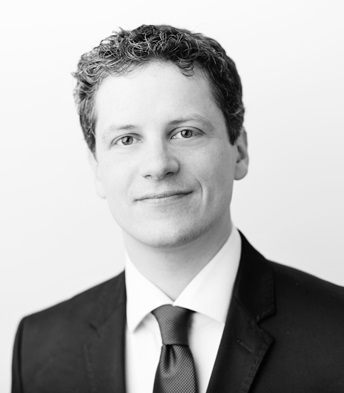}}]{Johannes Schiffer}
received the Diploma degree
in engineering cybernetics from the University of
Stuttgart, Germany, in 2009 and the Ph.D. degree
(Dr.-Ing.) in electrical engineering from Technische
Universitat (TU) Berlin, Germany, in 2015. ¨
He currently holds the chair of Control Systems
and Network Control Technology at Brandenburgische Technische Universitat Cottbus-Senftenberg, ¨
Germany, where he also serves as Deputy of Research. Prior to that, he has held appointments as
Lecturer (Assistant Professor) at the School of Electronic and Electrical Engineering, University of Leeds, U.K. and as Research
Associate in the Control Systems Group and at the Chair of Sustainable
Electric Networks and Sources of Energy both at TU Berlin.
In 2017 he and his co-workers received the Automatica Paper Prize over
the years 2014-2016. His current research interests include distributed control
and analysis of complex networks with application to microgrids and power
systems.
\end{IEEEbiography}

\begin{IEEEbiography}[{\includegraphics[width=1in,height=1.25in,clip,keepaspectratio]{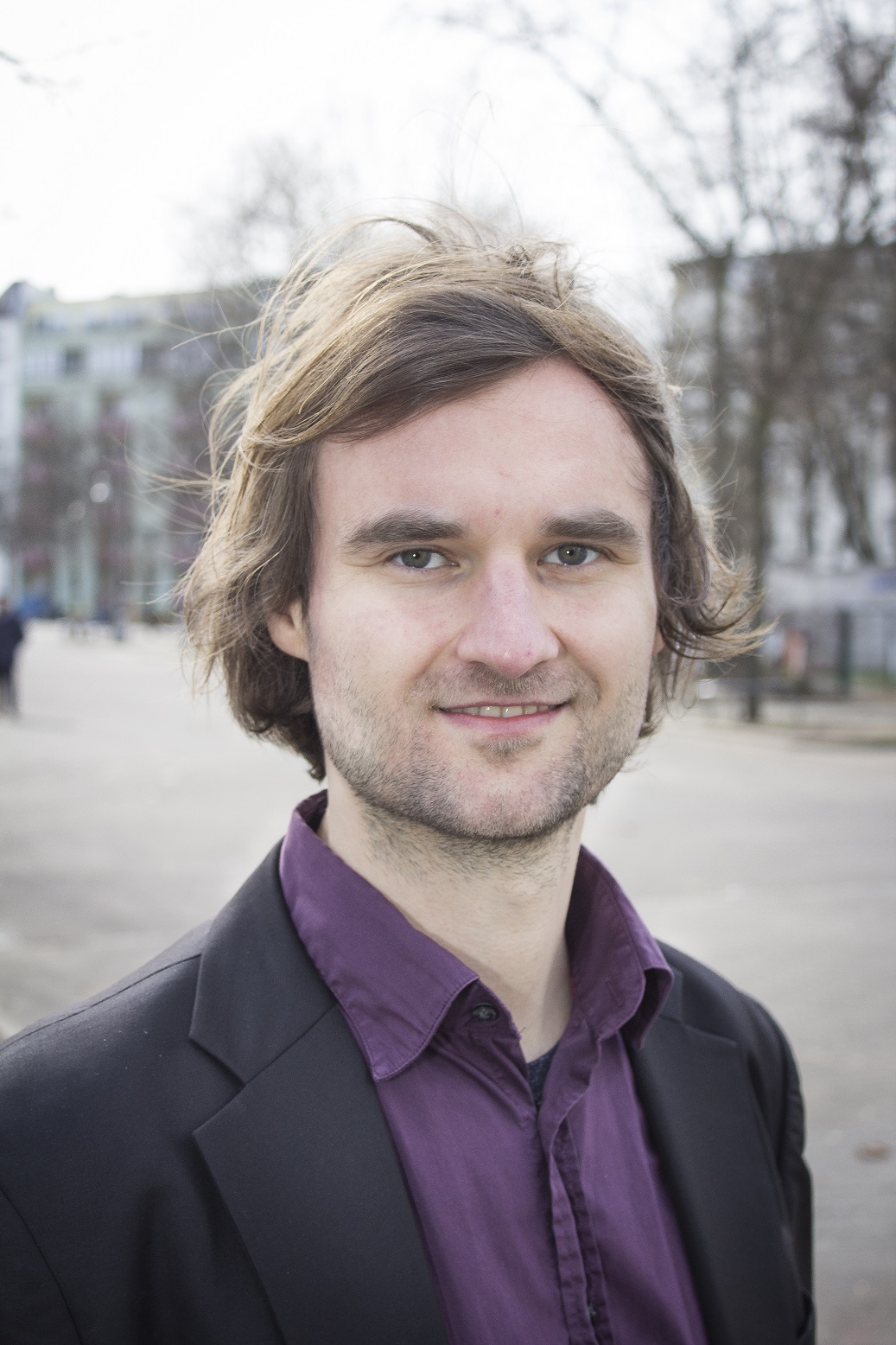}}]{M. Nicolai L. Lorenz-Meyer}
 received his M.Sc. in
Engineering Science from the Technical University
of Berlin in 2019 and his B.Eng. in Business Administration and Engineering for Environment and
Sustainability from the Beuth University of Applied
Sciences and the Berlin School of Economics and
Law in 2016.
He is currently pursuing the Ph.D. degree with the
the chair of Control Systems and Network Control
Technology at the Brandenburg University of Technology Cottbus-Senftenberg, Germany. His current
research interests include the development of control theory-based methods
for on-line dynamics security assessment in power systems.
\end{IEEEbiography}




\end{document}